\newtheorem{theorem}{Theorem}
\newtheorem{corollary}[theorem]{Corollary}
\newenvironment{proof}[1][Proof]{\noindent\textbf{#1.} }{\ \rule{0.5em}{0.5em}}
\newcommand\argmin{\mathop{\mathrm{argmin}}}
\def\be{\begin{equation}}
\def\ee{\end{equation}}
\def\ba{\begin{eqnarray}}
\def\ea{\end{eqnarray}}
\begin{document}
\date{\today}
\title{
Additive Classical Capacity of Quantum Channels Assisted by Noisy Entanglement
}

\author{Quntao Zhuang}
\email{quntao@mit.edu}
\affiliation{
Department of Physics, 
Massachusetts Institute of Technology, Cambridge, MA 02139, USA}
\affiliation{Research Laboratory of Electronics, 
Massachusetts Institute of Technology, Cambridge, MA 02139, USA}
\author{Elton Yechao Zhu}
\affiliation{
Department of Physics, 
Massachusetts Institute of Technology, Cambridge, MA 02139, USA}
\affiliation{Center For Theoretical Physics,
Massachusetts Institute of Technology, Cambridge, MA 02139, USA}
\author{Peter W. Shor}
\affiliation{Center For Theoretical Physics,
Massachusetts Institute of Technology, Cambridge, MA 02139, USA}
\affiliation{
Department of Mathematics, 
Massachusetts Institute of Technology, Cambridge, MA 02139, USA}

\date{\today}

\begin{abstract}
We give a capacity formula for the classical information transmission over a noisy quantum channel, with separable encoding by the sender and limited resources provided by the receiver's pre-shared ancilla. Instead of a pure state, we consider the signal-ancilla pair in a mixed state, purified by a ``witness''. Thus, the signal-witness correlation limits the resource available from the signal-ancilla correlation. Our formula characterizes the utility of different forms of resources, including noisy or limited entanglement assistance, for classical communication. With separable encoding, the sender's signals across multiple channel uses are still allowed to be entangled, yet our capacity formula is additive. In particular, for generalized covariant channels our capacity formula has a simple closed-form. Moreover, our additive capacity formula upper bounds the general coherent attack's information gain in various two-way quantum key distribution protocols. For Gaussian protocols, the additivity of the formula indicates that the collective Gaussian attack is the most powerful.

\end{abstract}
\pacs{03.67.Hk,03.67.Dd,03.67.-a,03.67.Bg} 
\maketitle

Communication channels model the physical medium for information transmission between the sender (Alice) and the receiver (Bob). Classical information theory~\cite{Shannon_1948,CT_book} says that a channel is essentially characterized by a single quantity---the (classical) channel capacity, {\it i.e.} its maximum (classical) information transmission rate. However, quantum channels~\cite{RMP_channel} can transmit information beyond classical. Formally, a (memoryless) quantum channel is a time-invariant completely positive trace preserving (CPTP) linear map between quantum states. Various types of information lead to various capacities, \textit{e.g.}, classical capacity $\mathcal{C}$~\cite{HSW1,HSW2} for classical information transmission encoded in quantum states and quantum capacity $\mathcal{Q}$~\cite{quantum_capacity_Lloyd,quantum_capacity_Shor,quantum_capacity_Devetak} for quantum information transmission. For both cases, implicit constraints on the input Hilbert space, \textit{e.g.}, fixed dimension or energy, quantify the resources. Resources can also be in the form of assistance: given unlimited entanglement, one has the entanglement-assisted classical capacity $\mathcal{C}_E$~\cite{Bennett_2002}. Ref.~\onlinecite{Wilde2012,Wilde2012_2} provide a capacity formula for the trade-off of classical and quantum information transmission and entanglement generation (or consumption).

With the trade-off capacity formula in hand, it appears that the picture of communication over quantum channels is complete. However, our understanding about the trade-off is plagued by the ``non-additivity'' issue~\cite{RMP_channel}, best illustrated by the example of $\mathcal{C}$. The Holevo-Schumacher-Westmoreland (HSW) theorem~\cite{HSW1,HSW2} gives the one-shot capacity $\mathcal{C}^{\left(1\right)}\left(\Psi\right)$ of channel $\Psi$, which assumes product-state input in multiple channel uses. Consider the tensor product channel $\Psi^{\otimes M}$, it may have one-shot capacity $\mathcal{C}^{\left(1\right)}\left(\Psi^{\otimes M}\right)> M\mathcal{C}^{\left(1\right)}\left(\Psi\right)$, since it allows the input state of $\Psi^M$ to be entangled across $M$ channel uses of $\Psi$ ($M-$shot). $\mathcal{C}\left(\Psi\right)$ is then given by the regularized expression as $\lim_{M\to \infty}\mathcal{C}^{\left(1\right)}\left(\Psi^{\otimes M}\right)/M$, which is difficult to calculate since the dimension of the input states of $\Psi^{\otimes M}$ is exponential in $M$. If we have the additivity property $\mathcal{C}^{\left(1\right)}\left(\Psi^{\otimes M}\right)=M\mathcal{C}^{\left(1\right)}\left(\Psi\right)$, the formula of the capacity is greatly simplified, \textit{i.e.} $\mathcal{C}\left(\Psi\right)=\mathcal{C}^{\left(1\right)}\left(\Psi\right)$. However, both $\mathcal{C}$~\cite{Hastings_2009} and $\mathcal{Q}$~\cite{Dur_2004} are known to be non-additive. Without additivity, quantification of the trade-off is in general infeasible.

An exception is the (unlimited) entanglement-assisted classical capacity $\mathcal{C}_E$~\cite{Bennett_2002}. 
Since it has the form of quantum mutual information~\cite{Wilde_book,Chuang_book}, $\mathcal{C}_E$ is additive~\cite{Bennett_2002,Adami_1997}.
One immediately hopes that the additivity can be extended to classical communication assisted by imperfect entanglement, since entanglement is fragile. Many such scenarios have been explored, {e.g.} superdense coding (SC) over a noisy channel assisted by noisy entanglement~\cite{dense_coding_1,dense_coding_2,dense_coding_3,dense_coding_4,dense_coding_5,dense_coding_6}, noiseless channel assisted by noisy entanglement~\cite{Horodecki01} and noisy channels assisted by limited pure state entanglement~\cite{Shor04}. However, all results are in general non-additive as expected~\cite{Zhu_2017}, since the above imperfect scenarios include the case with zero entanglement assistance---the non-additive $\mathcal{C}$.

In this paper, we obtain an additive classical capacity formula for a noisy quantum channel $\Psi$ assisted by resources such as noisy entanglement. In the most general formalism, Alice sends an optimized ensemble of (possibly mixed) states $\rho^i_{SE}$ to Bob, with signal $S$ through the channel $\Psi$ and an ancilla $E$ pre-shared through the identity channel $\mathcal{I}$. Each $\rho^i_{SE}$ is constrained by some resource, {\it e.g.} by the entanglement between $S$ and $E$. Here, similar to SC, we consider a restricted scenario of two-step signal preparation---resource distribution and encoding (see Fig.~\ref{scheme_single}). Each $\rho^i_{SE}$ is obtained by encoding on $S$ from a certain state $\rho_{SE}$. Moreover, the resource is constrained by the correlation between $S$ and a ``witness'' $W$---a purification of $\left(S,E\right)$.

In the resource distribution step, $W$ is made inaccessible to both Alice and Bob.
Instead of explicitly quantifying the available resource (between $S,E$) as in Ref.~\onlinecite{Shor04}, we describe the resource implicitly by quantifying the correlation between $S$ and $W$---the unavailable resource---by $K\ge1$ inequalities
\be
Q_k\left(\rho_{SW}\right)\ge y_k, k\in[1,K]
\label{singleform}
\ee 
on $\rho_{SW}$, where each $Q_k\left(\cdot\right)$ is a function on bipartite states. We denote Eqn.~\ref{singleform} by ${\bf Q}\left(\rho_{SW}\right)\ge  {\bf y}$. While Ref.~\onlinecite{Bennett_2002,Shor04} only considered pure state entanglement, the form of resources in our case can be arbitrary by choosing different $Q_k\left(\cdot\right)$, \textit{e.g.}, noisy entanglement, cross correlation~\cite{Gaussian_monogamy,Quntao_2015,Weedbrook_2012} or quantum discord~\cite{Discord}. However, entanglement measures are more meaningful to consider because: (1) they respect the unitary equivalence of the purification $W$; (2) constraints on the entanglement between $S$ and $W$ leads to constraints on the entanglement between $S$ and $ E$---a property known as monogamy~\cite{CKW_conjecture,Osborne_2006,Monogamy_tangle,Gaussian_monogamy}. 

Here we give an example of Eqn.~\ref{singleform}---the quantum mutual information~\cite{Wilde_book,Chuang_book} $I\left(S:W\right)\ge y$, $y\in[0, 2\log_2d]$ for qudit $S$. When $y=2\log_2d$, $\rho_{SW}$ is pure and thus $E$ and $S$ are uncorrelated. Since entanglement across multiple channel uses is also excluded here, the additivity of our capacity does not contradict the non-additivity of $\mathcal{C}$. When $y=0$, the optimum has $W$ and $S$ in a product state and $\rho_{SE}$ pure as in Ref.~\cite{Shor04}. This gives the case of Ref.~\cite{Bennett_2002}. For intermediate values of $y$, $\rho_{SE}$ is mixed and signals across multiple channel uses can be entangled, thus the additivity of our capacity is non-trivial. This example illustrates the desired property of function $Q_k\left(\cdot\right)$---the correlation between $S,W$ increases when $Q_k\left(\cdot\right)$ increases, with the two end points corresponding to $\rho_{SW}$ pure and product state.

In the encoding step, Alice performs a quantum operation $\varepsilon_x$~\cite{locc_explain} with probability $P_X\left(x\right)$ on $S$ to encode a message $x$, resulting in $S^\prime$ as the input to $\Psi$. In multiple channel uses, the encoding is a set of classically correlated separate operations---local operations and classical communication (LOCC)~\cite{LOCC}. $\rho_S$ is constrained to be in $\mathcal{B}\left(\mathcal{H}_S\right)$---density operators on Hilbert space $\mathcal{H}_S$, and the encoding is constrained to be in a certain set, \textit{i.e.}, $\left(P_X\left(\cdot\right), ~\varepsilon_\cdot\right)\in \mathbb{G}$. Upon receiving $\Psi's$ output $B$, Bob makes a joint measurement on $B$ and $E$ to determine $x$. The capacity of the above scenario is given as follows.
\begin{theorem}
{\rm (Classical capacity with limited resources and LOCC encoding.)}
With resources constrained by $V\equiv \left\{\,\left(P_X\left(\cdot\right), ~\varepsilon_\cdot\right)\in \mathbb{G},~\rho_{S}\in \mathcal{B}\left(\mathcal{H}_S\right),~{\bf Q}\left(\rho_{SW}\right)\ge{\bf y}\,\right\}$, suppose $\mathbb{G}$ allows arbitrary phase flips, the classical capacity of the quantum channel $\Psi$ is
\begin{align}
&\chi_L\left(\Psi\right)=\max_V S\left(\sum_x P_X\left(x\right) \Psi\circ \varepsilon_x \left[\rho_{S}\right]\right)&
\nonumber\\
&
-\sum_x P_X\left(x\right) E_{\Phi_{\mbox{$\varepsilon_x$}}\otimes \mathcal{I}}\left[\rho_{SW}\right],
\label{capacity_full}
\end{align}
where $\Phi_{\mbox{$\varepsilon_x$}}$ is the complementary quantum operation to $\Psi\circ\varepsilon_x$, the entropy gain $E_\phi$~\cite{entropy_gain} of a CPTP map $\phi$ on state $\rho$ is defined by
$
E_\phi \left[\rho\right]\equiv S\left(\phi\left[\rho\right]\right)-S\left(\rho\right),
$
and the maximization is over the encoding $\left(P_X\left(\cdot\right), ~\varepsilon_\cdot\right)$ and $\rho_{SW}$.
Eqn.~\ref{capacity_full} is additive when the constraint has a separable form on each channel use and the encoding is LOCC.
\label{channel_capacity_theorem}
\end{theorem}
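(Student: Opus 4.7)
The plan is to first recast the formula into a Holevo-like form, derive the converse and achievability from the HSW theorem, and finally address the additivity claim. I begin by purifying $\rho_{SE}$ with the witness so that $\rho_{SEW}$ is pure. Fixing an isometric Stinespring dilation $U_{\varepsilon_x}$ of $\Psi\circ\varepsilon_x$ with environment output $F$, the complementary map $\Phi_{\varepsilon_x}$ sends $S\to F$. The state $(U_{\varepsilon_x}\otimes\mathcal{I}_{EW})\rho_{SEW}(U_{\varepsilon_x}^\dagger\otimes\mathcal{I}_{EW})$ on $BFEW$ is pure, so complementary-marginal identities give $S((\Phi_{\varepsilon_x}\otimes\mathcal{I})[\rho_{SW}])=S(\rho_{FW}^x)=S(\rho_{BE}^x)$ and $S(\rho_{SW})=S(\rho_E)$. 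Thus the entropy-gain term reduces to $S(\rho_{BE}^x)-S(\rho_E)$ and the formula takes the form
\[
\chi_L(V)=S(\rho_B^{\text{avg}})+S(\rho_E)-\sum_x P_X(x)\,S(\rho_{BE}^x).
\]
The converse is then immediate: HSW with Bob jointly decoding $BE$ bounds any achievable rate by the Holevo quantity $\chi(V)=S(\rho_{BE}^{\text{avg}})-\sum_x P_X(x)S(\rho_{BE}^x)$, and since the encoding acts only on $S$ we have $\rho_E^x=\rho_E$ for every $x$; subadditivity of entropy then gives $\chi(V)\le\chi_L(V)$ and hence $C\le\max_V\chi_L(V)$.

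For achievability I would exploit the phase-flip hypothesis. Augmenting any candidate encoding with a uniformly random Pauli $\sigma_y$ on $S'$ (available because $\mathbb{G}$ contains all phase flips) applies a Pauli twirl that mixes the signal into $I_{S'}/d$, producing $\rho_{BE}^{\text{avg}}=\Psi(I_{S'}/d)\otimes\rho_E$, a product state. On a product state the subadditivity used in the converse saturates, so $\chi=\chi_L$ for the augmented ensemble, and HSW certifies this rate as achievable; maximizing over the resource $\rho_{SW}$ and the encoding attains $\max_V\chi_L$.

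For the additivity claim I take the separable constraint to mean $\rho_{SEW}^{(M)}=\bigotimes_i\rho_{S_iE_iW_i}$ and the encoding to be LOCC across channel uses. The product resource gives $S(\rho_E^{(M)})=\sum_i S(\rho_{E_i})$ exactly, and subadditivity gives $S(\rho_B^{\text{avg}})\le\sum_i S(\rho_{B_i}^{\text{avg}})$. What remains, and is the crux, is the matching lower bound $\sum_x P_X(x)\,S(\rho_{BE}^x)\ge\sum_i\sum_{x_i}P_{X_i}(x_i)\,S(\rho_{B_iE_i}^{x_i})$ on the averaged codeword entropy in terms of single-use marginals. Here the LOCC and separable hypotheses combine: LOCC acting on a product input produces $\rho_{BE}^x$ that remains separable across every $(B_iE_i)$ cut, and a convex-decomposition argument using concavity of the von Neumann entropy should reduce the joint entropy to a sum of single-use entropies.

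This last inequality is the main obstacle I anticipate. A general LOCC protocol can distribute classical side information across channel uses, so the bound is not automatic from separability alone, and the cleanest route I expect is induction on the number of LOCC rounds (or an explicit tracking of the shared classical register), using at each round the separability of LOCC outputs together with the pure-state identity from the first step. Once this bound is established, the three estimates combine to give $\chi_L(\Psi^{\otimes M})\le M\chi_L(\Psi)$; the reverse inequality is immediate by product strategies.
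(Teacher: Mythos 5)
Your reformulation and converse agree with the paper: the purity identities $S\left(\rho_E\right)=S\left(\rho_{SW}\right)$ and $S\left(\rho_{BE}^{\left(x\right)}\right)=S\left(\rho^{\left(x\right)}_{N^\prime C^\prime W}\right)$ plus subadditivity of $S\left(\rho_{BE}\right)$ are exactly how the upper bound $\chi_L^{\rm UB}$ is obtained. The two remaining steps, however, each contain a genuine gap. For achievability, random phase flips do not implement a depolarizing twirl: averaging over phase flips (all that $\mathbb{G}$ is assumed to contain) only dephases $S^\prime$ in a chosen basis, so the averaged state is $\sum_y \lambda_{xy}\,\Psi\left[\ket{v_{xy}}\bra{v_{xy}}\right]\otimes\rho_{E|xy}$ — classically correlated between $B$ and $E$, not a product — and single-letter subadditivity is not saturated. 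Getting $I/d$ would require the full Weyl group (shifts as well as phases), which is not granted, and even if it were, for a non-covariant $\Psi$ the extra unitaries change the conditional entropies $S\left(\rho_{BE}^{\left(x\right)}\right)$ as well as $S\left(\rho_B\right)$, so the twirled ensemble would not certify the claimed maximum (that trick is reserved for the generalized-covariant corollary). The paper's achievability is intrinsically asymptotic: block encoding in the style of Shor, with phase flips in the eigenbasis of each conditional state $\sigma_x$ together with random permutations of a long block, and a permutation-average entropy lemma showing the unconditional entropy tends to $n\left(S\left(\rho_B\right)+S\left(\rho_{SW}\right)\right)$ only as $n\to\infty$.

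For additivity, you have replaced the hypothesis by a strictly weaker one: the paper's ``separable form'' means each inequality involves only the single-use marginal $\rho_{S_mW_m}$, while the joint state across channel uses may be entangled — that is precisely what makes the claim non-trivial (your product-resource case is the trivial one). In the actual setting the inequality you identify as the crux, $\sum_{\mathbf{x}}P_{\mathbf{X}}\left(\mathbf{x}\right)S\left(\rho_{\mathbf{B}\mathbf{E}}^{\mathbf{x}}\right)\ge\sum_m\sum_{x_m}P_{X_m}\left(x_m\right)S\left(\rho_{B_mE_m}^{x_m}\right)$, is false in general: a conditional state entangled across uses can have joint entropy far below the sum of its single-use marginal entropies (an almost pure entangled $\rho_{\mathbf{B}\mathbf{E}}^{\mathbf{x}}$ with mixed marginals), so no induction on LOCC rounds can establish it. The paper sidesteps this by never lower-bounding the conditional output entropy alone; it bounds the difference $S\left(\rho^{\left(\mathbf{x}\right)}_{\mathbf{N}^\prime\mathbf{C}^\prime\mathbf{W}}\right)-S\left(\rho_{\mathbf{S}\mathbf{W}}\right)$, i.e.\ the entropy gain of $\left(\otimes_m\Phi_{\varepsilon_{x_m}}\right)\otimes\mathcal{I}$, and proves that this entropy gain is superadditive for arbitrary (entangled) inputs via the data-processing inequality (monotonicity of relative entropy, in a conditional-mutual-information form). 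Combined with the per-use form of the constraints this splits the maximization channel use by channel use, and product strategies give the matching lower bound. That superadditivity lemma is the missing idea in your argument.
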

We make two clarifications about the theorem. First, a schematic of $\Phi_{\mbox{$\varepsilon_x$}}$ is given in Fig.~\ref{scheme_single}. The encoding CPTP map $\varepsilon_x$ is extended to a unitary operation $U_x$ on $S$ and an environment $C$ in the vacuum state, resulting in $S^\prime$ in state $\varepsilon_x\left[ \rho_{S}\right]$ and $C^\prime$. $S^\prime$ is sent to Bob through $\Psi$, whose Stinespring's dilation is a unitary operation $U_\Psi$ on $S^\prime$ and an environment $N$ in the vacuum state, producing $B$ for Bob and an environment $N^\prime$. We define $\Phi_{\mbox{$\varepsilon_x$}}$ as the CPTP map from $\rho_{S}$
to $\rho_{N^\prime C^\prime}^{\left(x\right)}$, given $\varepsilon_x$.
Second, by a separable form of constraints on each channel use, we mean constraints expressed by a set of inequalities, each involving states only in a single channel use (see Eqn.~\ref{Mform2}).

\begin{figure}
\includegraphics[width=0.35\textwidth]
{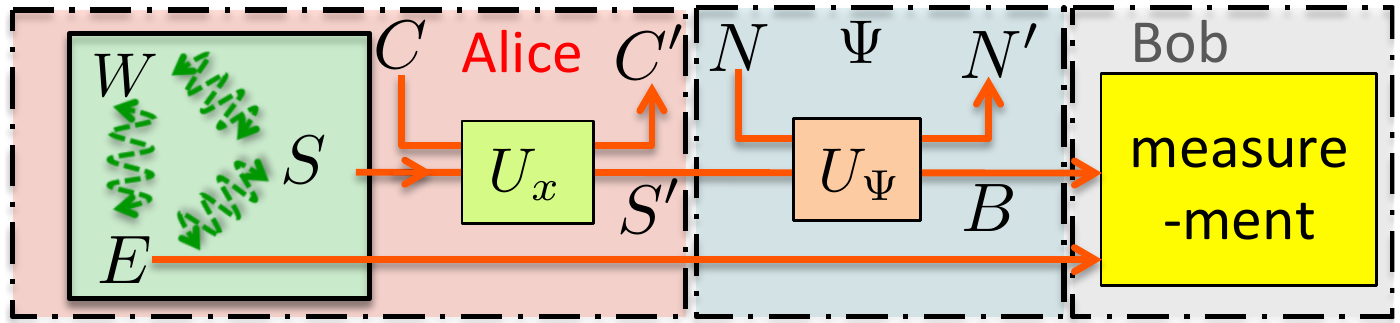}
\caption{Schematic of a single channel use.}
\label{scheme_single}
\includegraphics[width=0.35\textwidth]
{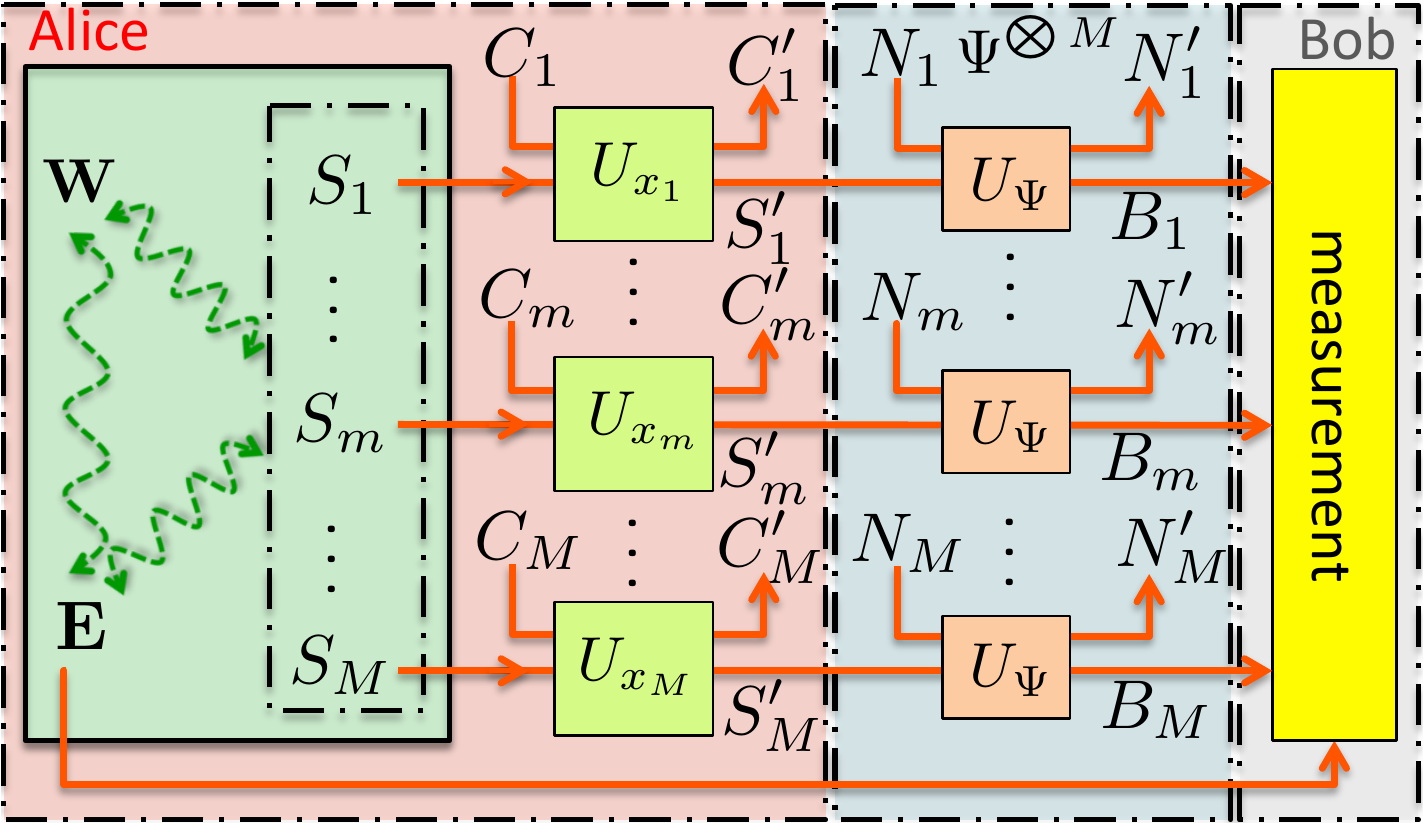}
\caption{Schematic of $M$ channel uses.}
\label{scheme_general} 
\end{figure}

We have given our main result ``theorem \ref{channel_capacity_theorem}'' in a single channel use scenario. In order to prove additivity, we need to consider multiple channel uses (Fig.~\ref{scheme_general}). 
Before that, we make a few more comments. First, for generalized covariant channels, including covariant~\cite{covariant} channels and Weyl-covariant~\cite{covariant_Weyl} channels, Eqn.~\ref{capacity_full} can be simplified. More details are given in corollary~\ref{channel_capacity_theorem_covariant_finite}.

Next, we discuss the relationships with other capacities. If $\mathbb{G}$ allows arbitrary encoding, one can choose to replace the original signal state with an optimal set of pure states, which guarantees that $\chi_L\ge \mathcal{C}^{(1)}$. With all encoding operations unitary, we obtain another lower bound $\chi_L^{\mathcal{I}}$. When $y_k$'s are maximum, $\chi_L=\mathcal{C}^{(1)}$; when $y_k$'s are minimum, $\chi_L=\mathcal{C}_E$;  Note when arbitrary phase flips are not allowed, the r.h.s. of Eqn.~\ref{capacity_full} upper bounds $\chi_L$, and it is still additive while $\chi_L$ might not be. We also point out that Ref.~\onlinecite{Shor04} and our result are different in the sense that neither of them can be reduced to the other. If $\varepsilon_x$'s are not unitary, then the environment $C^\prime$ is never sent to Bob. This is different from Ref.~\onlinecite{Shor04}, where all purification of the signal is sent to Bob. If we restrict $\varepsilon_x$'s to be unitary, the input states in Ref.~\onlinecite{Shor04} do not need to be related by unitary operations, different from our scenario~\cite{note2}.

Finally, we emphasize the application of our results. Our capacity formula provides an additive upper bound for the general eavesdropper's coherent attack~\cite{Renner_2008,Valerio_2009, Furrer_2012,Leverrier_2013} information gain for various two-way quantum key distribution (TW-QKD) protocols~\cite{Quntao_2015,Ping_Pong,two_way_no_loss,generalization_no_loss,modified_Ping_Pong,Pirandola_2008,Zhang_2014,Zheshen_2016,Ottaviani_2014,Ottaviani_2015,Ottaviani_2016}. The constraint in Eqn.~\ref{singleform} appears in security checking of TW-QKD protocols, where two parties verify properties of their state $\rho_{SW}$ to constrain the eavesdropper's benefit from $\left(S,E\right)$ (details in corollary~\ref{capacity_Eve}). 
Obtaining upper bounds for eavesdroppers in TW-QKD is more complicated than for one-way protocols due to the simultaneous attack on both the forward and the backward channels. Only special attacks~\cite{modified_Ping_Pong,Pirandola_2008,Zhang_2014,Ottaviani_2014,Ottaviani_2015,Ottaviani_2016} or general attacks in the absence of loss and noise~\cite{Ping_Pong,two_way_no_loss,generalization_no_loss} have been considered. Despite this difficulty, a TW-QKD protocol called ``Floodlight QKD'' has recently been shown to have the potential of reaching unprecedented secret key rate (SKR)~\cite{Quntao_2015,Zheshen_2016}. Consequently, our upper bound is crucial for high-SKR QKD.

{\em Multiple channel uses.---}Now we extend the single channel use scenario to $M\ge2$ channel uses in a non-trivial way that allows an additive classical capacity (Fig.~\ref{scheme_general}). 
We keep the same notation for all the modes except for adding a subscript to index the channel use. For convenience, we introduce the short notation ${\bf S} =\left\{\,S_m : m\in[1,M]\,\right\}$ for input signals, with its states $\rho_{\bf S}\in \mathcal{B}\left(\mathcal{H}_S^{\otimes M}\right)$, and also $\bf W$ for arbitrary inaccessible witness and $\bf E$ for arbitrary ancilla. Then the initial state $\left(\bf S, \bf E, \bf W\right)$ is pure.

The allowed encoding operations in $M$ channel uses are LOCC, \textit{i.e.}, they can be classically correlated, satisfying some joint distribution $P_{\bf X}\left(\cdot\right)$, where ${\bf X}=\left(X_1,\cdots, X_M\right)$ denotes the symbols in $M$ channel uses. Conditioned on the message ${\bf x}\equiv\left(x_1,\cdots, x_M\right)$, the encoding operation is $\varepsilon_{\bf x}= \otimes_{m=1}^M \varepsilon_{x_m}$.
Again the CPTP map $\varepsilon_{\bf x}$ can be extended as a unitary operation $\otimes_{m=1}^M U_{x_m}$, which takes in the signals $\bf S$ and the environment ${\bf C}=\left\{\,C_m : m\in[1,M]\,\right\}$ in the vacuum state and produces the encoded signals ${\bf S'}=\left\{\,S_m' : m\in[1,M]\,\right\}$ and environment ${\bf C^\prime}=\left\{\,C_m^\prime : m\in[1,M]\,\right\}$. Each encoding operation $\varepsilon_{x_m}$ with its own marginal distribution $P_{X_m}\left(\cdot\right)$ is still constrained to be inside the same set $\mathbb{G}$.

After the encoding step, each $S_m'$ is sent through $\Psi$ separately. The Stinespring's dilation of $\Psi^{\otimes M}$ takes $\bf S'$ and an environment ${\bf N}=\left\{\,N_m : m\in[1,M]\,\right\}$ in the vacuum state as inputs and outputs ${\bf B}=\left\{\,B_m : m\in[1,M]\,\right\}$ for Bob and the environment ${\bf N^\prime}=\left\{\,N_m^\prime : m\in[1,M]\,\right\}$. Bob decodes the message by joint measurements on $\left(\bf B,\bf E\right)$, where the pre-shared ancilla $\bf E$ provides resources quantified by the constraint
$
{\bf Q}\left(\rho_{S_m\bf W}\right)\ge  {\bf y}, m\in[1,M].
$
One can also consider $M$ witnesses ${\bf W}=\{\,W_m : m\in[1,M]\,\}$, with constraints on each signal-witness pair,
\be
{\bf Q}\left(\rho_{S_mW_m}\right)\ge  {\bf y}, m\in[1,M].
\label{Mform2}
\ee
Note that both constraints have a separate form on each channel use, allow entanglement between $S_m$'s across channel uses when ${\bf y}$ is not maximum and give the same additive capacity formula in theorem~\ref{channel_capacity_theorem}~\cite{note2}.

{\em Proof of theorem~\ref{channel_capacity_theorem}.---}
With the $M$-channel-use scenario established, we now prove theorem~\ref{channel_capacity_theorem}. The one-shot classical capacity of the product channel $\Psi \otimes \mathcal{I}$ for $\left(S',E\right)$ is given by the constrained version of the HSW theorem
\be
\label{HSW}
\chi_L\left(\Psi\right)=\max_V
 \left\{\,
S\left(\rho_{BE}\right) -\sum_{x}P_X\left(x\right)S\left(\rho_{BE}^{\left(x\right)}\right)
\,\right\},
\ee
where the maximization is over the encoding $\left(P_X\left(\cdot\right), \varepsilon_\cdot\right)$ and the source $\rho_{SW}$ constrained by $V$, and ${\rho_{BE}^{\left(x\right)}=\left(\Psi\circ \varepsilon_x\right)\otimes \mathcal{I} \left[\rho_{SE}\right]}$, with ${\rho_{BE}=\sum_x P_X\left(x\right) \rho_{BE}^{\left(x\right)}}$. Because $\left(S, E, W\right)$ and $N$, $C$ are pure, $S\left(\rho_{E}\right)=S\left(\rho_{SW}\right)$; it also follows that $\left(B, E, W, N^\prime,C^\prime\right)$ is pure, conditioned on $x$. Thus $S\left(\rho_{B E}^{\left(x\right)}\right)=S\left(\rho_{N^\prime C^\prime W}^{\left(x\right)}\right)$. Using the sub-additivity of von Neumann entropy on $S\left(\rho_{BE}\right)$ and combining the above equalities,
\begin{align}
& \chi_L\left(\Psi\right)\le\chi^{\rm UB}_L\left(\Psi\right)\equiv\max_V
 \biggl\{\,
S\left(\rho_{B}\right)
&
\nonumber\\
&
-\sum_x P_X\left(x\right)\left[S\left(\rho_{N^\prime C^\prime W}^{\left(x\right)}\right)-S\left(\rho_{SW}\right)\right]\,\biggr\}.&
\label{upperbounddef}
\end{align}

Noticing that $\Phi_{\mbox{$\varepsilon_x$}}$ maps $S$ to $N^\prime C^\prime$, Eqn.~\ref{upperbounddef} can be expressed as $\chi_L^{\rm UB}\left(\Psi\right)=\max_V F\left[\rho_{SW},\left(\,P_X\left(\cdot\right),\varepsilon_\cdot\,\right)\right]$, where
\be
F\left[\rho_{SW},\left(\,P_X\left(\cdot\right),\varepsilon_\cdot\,\right)\right]\equiv 
S\left(\rho_{B}\right)-\sum_x P_X\left(x\right) E_{\Phi_{\mbox{$\varepsilon_x$}}\otimes \mathcal{I}}\left[\rho_{SW}\right].
\label{upperbound1}
\ee
It's subadditive since $E_\phi$ is superadditive~\cite{note2}.

Now we switch to the $M$ channel uses scenario to prove additivity. If we adopt constraint~\ref{Mform2}, the overall constraint $V^{\left(M\right)}$ is in a separable form of $\left\{\,V_m, m\in[1,M]\,\right\}$, where
$
V_m\equiv \left\{\,\left(P_{X_m}\left(\cdot\right), \varepsilon_\cdot\right)\in \mathbb{G},~\rho_{S_m}\in \mathcal{B}\left(\mathcal{H}_S\right),~
{\bf Q}\left(\rho_{S_mW_m}\right)\ge  {\bf y}\,\right\}
$.
This separable form and the LOCC encoding allows the upper bound~\cite{note2}
$
\chi_L^{\rm UB}\left(\Psi^{\otimes M}\right) \le\sum_{m=1}^M \max_{V_m}F\left[\rho_{S_mW_m},\left(\,P_{X_m}\left(\cdot\right),\varepsilon_\cdot\,\right)\right],
$
which can be achieved~\cite{note2} by block encoding~\cite{Shor04}, leading to Eqn.~\ref{capacity_full} since $\rho_{B}=\sum_x P_X\left(x\right) \Psi\circ \varepsilon_x \left[\rho_{S}\right]$.

{\em Special case: generalized covariant channels.---} 
Consider a $d$-dimensional channel $\Psi$, we define its covariant group $G\left(\Psi\right):=\{U\in U(d):\,\forall\,\mbox{density matrix}\,\rho,\,\exists\,V\in U(d), \,s.t.\,\Psi\left(U\rho U^\dagger\right)=V\Psi\left(\rho \right)V^\dagger\} $, where $U(d)$ is the $d$ dimensional unitary group. If there exists a subset $G_U\left(\Psi\right)\subset G\left(\Psi\right)$ of size $d^2$ such that $\sum_{U_x\in G_U\left(\Psi\right)} U_x M U_x^\dagger=0$ for all $d\times d$ traceless matrices $M$~\cite{Horodecki01}, we call $\Psi$ {\it generalized covariant}. Generalized covariant channels include covariant channels~\cite{covariant} and Weyl-covariant channels~\cite{covariant_Weyl}, and they allow a simplification of theorem~\ref{channel_capacity_theorem}~\cite{note2}.

\begin{corollary}
With arbitrary qudit state as input and arbitrary encoding, and resources constrained by ${\bf Q}\left(\rho_{SW}\right)\ge{\bf y}$, the classical capacity of a $d$-dimensional generalized covariant channel $\Psi$ is
\be
\chi_L\left(\Psi\right)=S\left(\Psi(I/d)\right)-\min_{\substack{\varepsilon,\rho_{SW},\\{\bf Q}\left(\rho_{SW}\right)\ge{\bf y}}} E_{\Phi_{\mbox{$\varepsilon$}}\otimes \mathcal{I}}\left[\rho_{SW}\right].
\label{capacity_covariant_finite}
\ee
It is additive when the constraint has a separable form on each channel use and the encoding is LOCC.
\label{channel_capacity_theorem_covariant_finite}
\end{corollary}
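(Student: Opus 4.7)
The plan is to specialize Theorem~\ref{channel_capacity_theorem} by exploiting the $1$-design property of $G_U(\Psi)$. The defining identity $\sum_{U_x\in G_U(\Psi)} U_x M U_x^\dagger=0$ for traceless $M$, applied to $\rho-I/d$, yields $\tfrac{1}{d^2}\sum_{U_x\in G_U(\Psi)} U_x\rho U_x^\dagger=I/d$ for every qudit state $\rho$. Combined with the covariance relation $\Psi(U_x\rho U_x^\dagger)=V_x\Psi(\rho)V_x^\dagger$, this gives the key identity $\tfrac{1}{d^2}\sum_x V_x\Psi(\rho)V_x^\dagger=\Psi(I/d)$ valid for all inputs $\rho$.

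For the $\le$ direction, fix any admissible $V$ and set $\tau=\sum_x P_X(x)\varepsilon_x[\rho_S]$ so that $\rho_B=\Psi(\tau)$. Writing $\Psi(I/d)=\tfrac{1}{d^2}\sum_y V_y\Psi(\tau)V_y^\dagger$ and invoking concavity plus unitary invariance of $S$ gives $S(\rho_B)\le S(\Psi(I/d))$. Independently, $\sum_x P_X(x) E_{\Phi_{\varepsilon_x}\otimes\mathcal{I}}[\rho_{SW}]\ge\min_{\varepsilon_0,\rho'_{SW}} E_{\Phi_{\varepsilon_0}\otimes\mathcal{I}}[\rho'_{SW}]$, where the minimization ranges over encodings and over $\rho'_{SW}$ meeting ${\bf Q}(\rho'_{SW})\ge{\bf y}$. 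Subtracting the two bounds establishes the upper bound in Eqn.~\ref{capacity_covariant_finite}.

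For achievability, let $(\varepsilon^*,\rho_{SW}^*)$ attain the minimum and use the augmented encoding indexed by $y\in G_U(\Psi)$ with uniform weight $1/d^2$ and operations $\varepsilon_y=\mathcal{U}_y\circ\varepsilon^*$, where $\mathcal{U}_y(\rho)=U_y\rho U_y^\dagger$; this encoding is admissible because arbitrary encoding is allowed. The $1$-design identity forces the average channel output to be $\Psi(I/d)$, matching the first term of Eqn.~\ref{capacity_full}. Moreover, covariance gives $\Psi\circ\mathcal{U}_y=\mathcal{V}_y\circ\Psi$, so $\Psi\circ\varepsilon_y$ differs from $\Psi\circ\varepsilon^*$ only by a unitary on $B$ that does not enter the environment of the Stinespring dilation; hence $\Phi_{\varepsilon_y}=\Phi_{\varepsilon^*}$ and the entropy gain term collapses to $E_{\Phi_{\varepsilon^*}\otimes\mathcal{I}}[\rho_{SW}^*]$, matching the upper bound. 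Additivity under separable constraints and LOCC encoding is inherited directly from Theorem~\ref{channel_capacity_theorem}, since Eqn.~\ref{capacity_covariant_finite} is simply a closed-form specialization of Eqn.~\ref{capacity_full}.

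The main obstacle I anticipate is a careful justification of $\Phi_{\mathcal{U}_y\circ\varepsilon^*}=\Phi_{\varepsilon^*}$ from the complementary-map construction in Fig.~\ref{scheme_single}: one must verify that the covariance-induced $V_y$ acts purely on $B$ in the dilated circuit, leaving the environment registers $(N',C')$ that define $\Phi$ unchanged, so that the entropy gain evaluated on $\rho_{SW}$ is truly invariant under insertion of $\mathcal{U}_y$.
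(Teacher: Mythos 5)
Your route is essentially the paper's: the upper bound is exactly the paper's Lemma~1 argument (the $1$-design identity plus covariance and concavity give $S\left(\rho_B\right)\le S\left(\Psi(I/d)\right)$, while the entropy-gain sum is bounded below by its minimum), and the achieving ensemble is exactly the paper's $\varepsilon^\star$ followed by the $d^2$ unitaries of $G_U\left(\Psi\right)$ with equal weight. Two remarks. First, the step you yourself flagged is where your justification fails as stated: it is not true in general that the covariance unitary ``acts purely on $B$ and leaves $\left(N^\prime,C^\prime\right)$ unchanged.'' Fixing a dilation $U_\Psi$, covariance gives $U_\Psi\left(U_y\otimes I_N\right)=\left(V_y\otimes W_y\right)U_\Psi$ for some unitary $W_y$ on $N^\prime$ (equivalently, the complementary channel of a covariant channel is itself covariant, with a generally nontrivial representation $W_y$), so $\Phi_{\mathcal{U}_y\circ\varepsilon^\star}=\left(\mathcal{W}_y\otimes\mathcal{I}_{C^\prime}\right)\circ\Phi_{\varepsilon^\star}$, which is not equal to $\Phi_{\varepsilon^\star}$ in general. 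What you actually need---and what does hold---is only that the entropy gain $E_{\Phi_{\mathcal{U}_y\circ\varepsilon^\star}\otimes\mathcal{I}}\left[\rho_{SW}\right]$ is unchanged, by unitary invariance of the von Neumann entropy; with that substitution your achievability computation goes through. Second, a difference of justification rather than of substance: you obtain achievability by plugging the augmented ensemble into Theorem~1, which is legitimate here because ``arbitrary encoding'' includes the phase flips that Theorem~1's achievability (block coding) requires; the paper instead notes that averaging over the $d^2$ group unitaries acts as a fully depolarizing channel on $S^\prime$, so $\rho_{BE}=\rho_B\otimes\rho_E$ exactly and the HSW expression attains the upper bound directly---which is why the paper can remark that phase flips are not needed for this corollary. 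The additivity claim is inherited from Theorem~1 in both treatments.
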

Note that the encoding being considered is $\varepsilon$ plus unitaries in $G_U\left(\Psi\right)$.
Lower bounds of $\chi_L\left(\Psi\right)$ are obtained by choosing special $\varepsilon$; if $\varepsilon=\mathcal{I}$ (unitary encoding), $\Phi_{\mbox{$\varepsilon$}}$ is $\Psi$'s complementary channel $\Psi^c$ and we recover $\chi_L^{\mathcal{I}}\left(\Psi\right)$; if $\varepsilon=\mathcal{R}$, the map from all states to a pure state inside $\mathcal{H}_S$, we recover $\mathcal{C}^{(1)}$. Note here we do not require phase flips to guarantee achievability.

For the QEC~\cite{QEC}, Eqn.~\ref{capacity_covariant_finite} can be further simplified to
$
\chi_L\left(\Psi\right)= \max_{\varepsilon,\rho_{SW}}\left(1-\epsilon\right)\left(\log_2 d- E_{\varepsilon^c\otimes \mathcal{I}}\left[\rho_{SW}\right]\right),
$
where $\epsilon$ is the erasure probability~\cite{note2}. Let the quantum mutual information be the bipartite correlation measure in $Q\left(\rho_{SW}\right)\ge  y$. One can further obtain the lower bound~\cite{note2}
$
\chi_L^{\mathcal{I}}=\mathcal{C}_E\left(1-y/\left(2\log_2 d\right)\right),
$
where $\mathcal{C}_E=\left(1-\epsilon\right)2\log_2 d$~\cite{Wilde_book}. The other lower bound is  $\mathcal{C}^{(1)}=\mathcal{C}=\mathcal{C}_E/2$~\cite{Bennett_1999}. We observe that: at $y=2\log_2 d$, $\rho_{SW}$ is maximally entangled thus $\rho_S=I/d$, $\chi_L=\mathcal{C}^{(1)}$ while $\chi_L^{\mathcal{I}}=0$; at $y=0$, $\chi_L=\chi_L^{\mathcal{I}}=\mathcal{C}_E$. These two points are generic for all channels; when $0<y<2\log_2 d$, it is open what $\varepsilon$ allows $\chi_L\left(\Psi\right)$ to exceed $\max\left[\chi_L^{\mathcal{I}}, \mathcal{C}^{(1)}\right]$. Numerical results of quantum depolarizing channel~\cite{Chuang_book} suggest similar scaling behaviour with $y$~\cite{note2}.

{\em Application in quantum cryptography.---} 
We apply theorem~\ref{channel_capacity_theorem} in TW-QKD protocols to bound the general eavesdropper Eve's (coherent attack) information gain. Fig.~\ref{scheme_QKD} shows a general TW-QKD protocol~\cite{two_way_no_loss}. First, party-1 prepares a pure signal-reference pair $(R,W)$. Reference $W$ is kept by party-1 and a portion of it is used for security checking~\cite{note}. Then the signal $R$ goes through the forward channel controlled by Eve to party-2. Eve performs a unitary operation on $R$ and the pure mode $V$, producing her ancilla $E$ and $S$ for party-2. Note that in multiple channel uses, Eve's unitary operation can act on all signals jointly. Upon receiving $S$, party-2 uses a portion of the $S$ for security checking~\cite{note} and encodes a secret key on the rest of $S$ by a chosen scheme $\left(P_X\left(\cdot\right), ~\varepsilon_\cdot\right)$. The security checking by party-1 and party-2 jointly measures some functions ${\bf Q}\left(\rho_{SW}\right)$ of the state $\rho_{SW}$. Then the encoded signal goes through channel $\Psi$ in party-2~(\textit{e.g.}, device loss, amplification), leading to the output mode $B$, which is sent back to party-1 through the backward channel controlled by Eve. Finally, party-1 makes a measurement on the received mode and reference $W$ to obtain the secret key. 
\begin{figure}
\includegraphics[width=0.35\textwidth]
{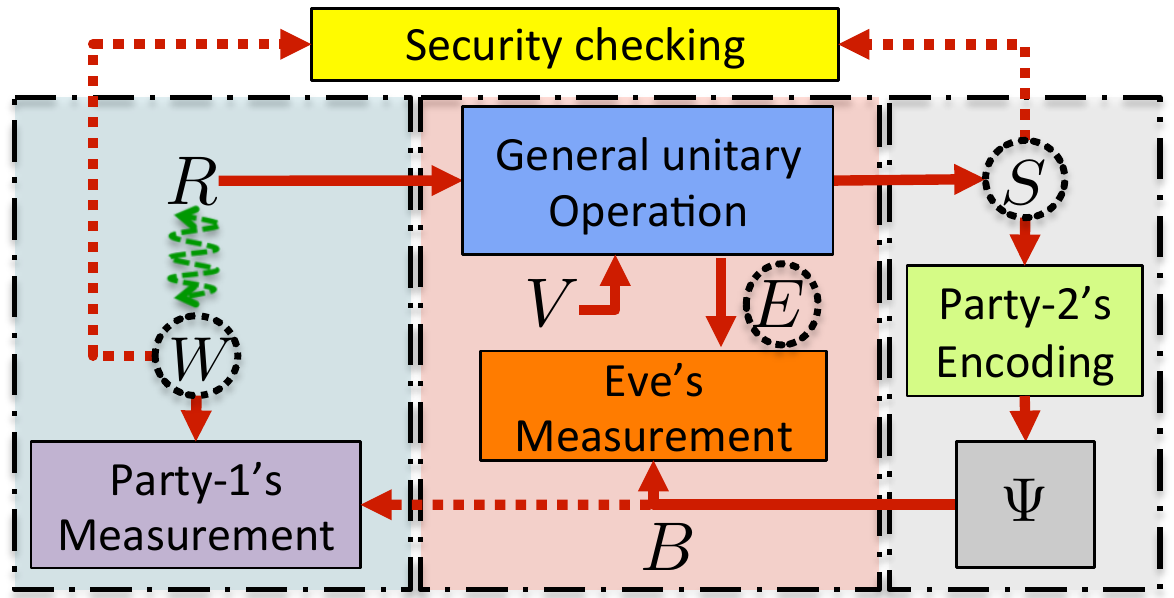}
\caption{Schematic of two-way QKD. The dotted circles highlight the three modes in the resource distribution step.}
\label{scheme_QKD}
\end{figure}
\begin{corollary}
In the TW-QKD protocol given above, the information gain per channel use of the eavesdropper's coherent attack is upper bounded by
$
\chi_L\left(\Psi\right)=\max_{\rho_{SW}}
F\left[\rho_{SW},\left(\,P_X\left(\cdot\right),\varepsilon_\cdot\,\right)\right],
$
where $F\left[\cdot\right]$ is defined in Eqn.~\ref{upperbound1},and the maximization is constrained by security checking measurement result ${\bf Q}\left(\rho_{SW}\right)={\bf y}$ and $\rho_W$ fixed.
\label{capacity_Eve}
\end{corollary}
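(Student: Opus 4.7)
The plan is to translate Eve's coherent attack into the setup of Theorem~\ref{channel_capacity_theorem} and invoke its bound directly. The first step is to identify Eve's accessible quantum system: her information gain about the key $X$ is, by the HSW theorem, upper bounded by the Holevo quantity $\chi\bigl(\{P_X(x),\rho_{\mathrm{Eve}}^{(x)}\}\bigr)$. Because Eve controls the backward channel returning $B$ to party-1, a Stinespring dilation of her backward-channel operation places the entire environment inside her private system, and monotonicity of the Holevo quantity under local CPTP maps guarantees that any further processing on her side cannot help. So her optimal strategy can be taken to be ``keep everything'': retain $B$ together with her forward-channel ancilla $E$, giving $\rho_{\mathrm{Eve}}^{(x)}=\rho_{BE}^{(x)}=(\Psi\circ\varepsilon_x)\otimes\mathcal{I}\bigl[\rho_{SE}\bigr]$.

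Next, I would verify that the purity and constraint structure required by Theorem~\ref{channel_capacity_theorem} are satisfied. Since $(R,W)$ starts pure and Eve's forward-channel attack is a unitary on $(R,V)$ with $V$ in a pure state, the tripartite state $(S,E,W)$ is pure, with $W$ inaccessible to both party-2 (the effective sender) and Eve (the effective receiver). The security checks certify ${\bf Q}(\rho_{SW})={\bf y}$ and fix $\rho_W$, which is a restriction of the constraint ${\bf Q}(\rho_{SW})\ge{\bf y}$ used in the theorem and hence yields a valid (and tighter) bound. With this identification, Eve's Holevo quantity is exactly the one-shot HSW expression in Eqn.~\ref{HSW}, but with the encoding $(P_X(\cdot),\varepsilon_\cdot)$ held fixed by party-2's protocol. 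The chain of manipulations in Eqns.~\ref{upperbounddef}--\ref{upperbound1}---which did not require phase-flip invariance to serve as an \emph{upper} bound---then delivers $F\bigl[\rho_{SW},(P_X(\cdot),\varepsilon_\cdot)\bigr]$, and maximizing over the remaining free variable $\rho_{SW}$ subject to the security-check constraints gives the stated bound.

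For a coherent attack spanning $M$ channel uses, Eve may entangle her ancillas and act jointly on all backward modes. I would invoke the multi-use analysis used in the proof of Theorem~\ref{channel_capacity_theorem}: the per-use security checks enforce ${\bf Q}(\rho_{S_m W_m})={\bf y}$, which is exactly the separable form of Eqn.~\ref{Mform2}, and party-2's encoding is a product across uses and so is trivially LOCC. The additivity conclusion of the theorem then gives total Holevo information at most $M\chi_L(\Psi)$, bounding the per-channel-use information gain of any coherent attack by $\chi_L(\Psi)$.

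The main obstacle will be the first step: rigorously justifying the ``keep-everything'' reduction on the backward channel. Concretely, one must show that any isometric dilation of Eve's backward-channel operation merely enlarges her private system, and that any subsequent partial trace or CPTP map only reduces her Holevo information; this is where data processing and monotonicity do the real work. Once this reduction is in place, the remainder of the proof is essentially a relabeling of Theorem~\ref{channel_capacity_theorem}, with party-2 playing the role of Alice, Eve playing the role of Bob, and party-1 serving as the custodian of the inaccessible witness $W$.
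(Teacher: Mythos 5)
Your proposal is correct and follows essentially the same route as the paper: concede all of $B$ (plus $E$) to Eve, identify the QKD setting with the scenario of theorem~\ref{channel_capacity_theorem} (purity of $(S,E,W)$ after Eve's forward unitary, $W$ as the inaccessible witness, security checks giving separable per-use constraints, party-2's product encoding being LOCC), and invoke the additive upper bound $F$ of Eqn.~\ref{upperbound1} without needing the phase-flip achievability step. Your explicit data-processing justification of the ``keep-everything'' reduction is simply a more detailed phrasing of the paper's concession argument, so no substantive difference remains.
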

\begin{proof}
To upper bound Eve's information gain, we give Eve all of $B$. This concession to Eve will not substantially increase Eve's information gain in long distance QKD, since the return fiber loss $\ll1$ (\textit{e.g.}, $\sim0.01$ at $100$ kilometers), which means almost all the light is leaked to Eve. Eve makes an optimal measurement on all $\left(B,E\right)$ pairs in multiple channel uses.

In a single run of the QKD protocol, $(S, E, W)$ is pure after Eve's unitary operation, the same as the scenario for theorem~\ref{channel_capacity_theorem}. Here $W$ is the witness---kept locally by party-1 and inaccessible to Eve; $E$ provides the resource as the pre-shared ancilla. The multiple QKD protocol runs also fit in our scenario. Moreover, party-1 and party-2 perform security checking to obtain constraints in the form of Eqn.~\ref{singleform} and Eqn.~\ref{Mform2} on $\rho_{SW}$. Controlled by party-2, the encoding operations are always LOCC. Eqn.~\ref{capacity_full} upper bounds the information gain per channel use of Eve's coherent attack.
\end{proof}

{\em Special case: Gaussian protocol.---}
If party-2 chooses the Gaussian channel $\Psi$ covariant with the unitary encoding, similar to corollary~\ref{channel_capacity_theorem_covariant_finite}, $\chi_L\left(\Psi\right)$ in corollary~\ref{capacity_Eve} has
\be
F\left[\rho_{SW},\left(\,P_X\left(\cdot\right),\varepsilon_\cdot\,\right)\right]=
S\left(\rho_B\right)-E_{\Psi^c\otimes \mathcal{I}}\left[\rho_{SW}\right].
\label{capacity_covariant}
\ee
For Gaussian protocols, the source $(R,W)$ and the channel $\Psi$ are Gaussian. The security checking functions are the mean photon number of $S$, and the cross-correlation between $S$ and $W$---both are functions of the covariance matrix $\Lambda_{SW}$ of $\rho_{SW}$. As a simplified form of Eqn.~\ref{upperbound1}, Eqn.~\ref{capacity_covariant} is subadditive. Moreover, $W$ is Gaussian and passive symplectic transforms~\cite{Weedbrook_2012} over $S$ preserve Eqn.~\ref{capacity_covariant}~\cite{Quntao_2015}, so the Gaussian extremality theorem~\cite{WolF_2006} applies. With all constraints on $\Lambda_{SW}$, Eqn.~\ref{capacity_covariant} is maximum when $\rho_{SW}$ is Gaussian. Thus for Gaussian protocols, the collective Gaussian attack is the most powerful.

{\em Discussion.---}
In future work, constraints in expectation value forms, \textit{i.e.} $\mathbb{E} \left[Q_k\left(\rho_{SW}\right)\right]\ge y_k$, extension of corollary.~\ref{channel_capacity_theorem_covariant_finite} to infinite dimensional systems and explicit evaluation of the capacity of QEC are of interest.

\begin{acknowledgements}
QZ is supported by the Claude E. Shannon Research Assistantship and AFOSR Grant No.~FA9550-14-1-0052. EYZ is supported by the National Science Foundation under grant Contract Number CCF-1525130. PWS is supported by the National Science Foundation under
grant Contract Number CCF-1525130, and by the NSF through
the STC for Science of Information under grant number CCF0-939370. The authors thank Zheshen Zhang, Jeffrey Shapiro, Aram Harrow and Zi-Wen Liu for helpful discussion. The authors also thank the anonymous referees for helpful feedback.
\end{acknowledgements}


\pagebreak
\begin{center}
\begin{widetext}
\textbf{\large Supplemental Materials: Additive Classical Capacity of Quantum Channels Assisted by Noisy Entanglement}
\end{widetext}
\end{center}
\setcounter{equation}{0}
\setcounter{figure}{0}
\setcounter{table}{0}
\setcounter{page}{1}
\makeatletter
\renewcommand{\theequation}{S\arabic{equation}}
\renewcommand{\thefigure}{S\arabic{figure}}
\renewcommand{\bibnumfmt}[1]{[S#1]}
\renewcommand{\citenumfont}[1]{S#1}


\section{Proof of theorem~1 in the main paper}
\subsection{Achievability of the upper bound}
\label{achievability}
The constrained capacity $\chi_L\left(\Psi\right)$ and the upper bound $\chi_L^{\rm UB}\left(\Psi\right)$ in Eqn.~\ref{upperbounddef} in the main paper are equal when the the unconditional state with optimum encoding and initial state choice satisfies the separability condition  
\be
\rho_{BE}=\rho_{B}\otimes\rho_{E}.
\ee
A special case where this is trivially satisfied is when there is no entanglement assistance. In that case, there is no ancilla $E$ and thus separability holds.

In general, the way to satisfy separability asymptotically is by the disentangling random phase flip encoding. After the encoding operation, $\left(S',E,W,C'\right)$ is pure. $S'$ and $\left(E,W,C'\right)$ can then be disentangled by performing the extra encoding in Ref.~\onlinecite{Bennett2002_sup} and Ref.~\onlinecite{Shor04_sup}, thus $I\left(S':EWC'\right)\simeq 0$. Here $I\left(A:B \right)=S\left(A\right)+S\left(B\right)-S\left(AB\right)$ is the quantum mutual information. Because $I\left(S':E\right)\le I\left(S':EWC'\right)\simeq 0$, we expect $S'$ and $E$ to be also in a product state after the disentangling. The formal proof is by block encoding~\cite{Shor04_sup}. 

Suppose the maximization $\chi_L^{\rm UB}\left(\Psi\right)$ in Eqn.~\ref{upperbounddef} in the main paper is acheived when the conditional states are $\sigma_x=\rho^{(x)}_{S^\prime}=\varepsilon_x \left[\rho_{S}\right]$, each with probability $p\left(x\right)$ and the joint state of $\left(S',E,W,C'\right)$ is $\ket{\Phi_{\mbox{$\varepsilon_x$}}}$. Then we can always create a block encoding by adding additional phase flips, which do not change $\chi_L^{\rm UB}\left(\Psi\right)$ in Eqn.~\ref{upperbounddef} in the main paper but make sure that $\chi_L\left(\Psi\right)= \chi^{\rm UB}_L\left(\Psi\right)$ asymptotically. The block encoding is the same as the one used in Ref.~\onlinecite{Shor04_sup}.

Let's consider $n\gg1$ number of states $\left\{\,\rho_1,\cdots, \rho_n\,\right\}$ sampled from the possible states $\sigma_x$'s, each with probability $p\left(x\right)$. There are $n_x= np\left(x\right)$ of $\sigma_x$ states. 
For the state $\sigma_x$ of $S'$, let $\ket{v_{xy}}$'s be its eigenstates and $\lambda_{xy}$'s the corresponding eigenvalues, \textit{i.e.}, $\sigma_x\equiv\sum_{y=1}^d \lambda_{xy} \ket{v_{xy}}\bra{v_{xy}}$. Then the joint state of $\left(S',E,W,C'\right)$, when $S'$ is in $\sigma_x$, is given by $\ket{\Phi_{\mbox{$\varepsilon_x$}}}\equiv\sum_{y=1}^d \sqrt{\lambda_{xy}}\ket{v_{xy}}_{S}\ket{v_{xy}}_{E W C'}$, where $d$ is the dimension of $S$. (For continuous variable (CV) system, a proper cutoff of dimension will work.) Besides the encoding operations $\varepsilon_\cdot$ which we already implemented to obtain the $\sigma_x$'s, Alice applies a random phase flip by a local unitary operation 
\begin{equation}
F_{x p}=\frac{1}{d}\sum_{y=1}^d(-)^{f\left(p,y\right)}\ket{v_{xy}}_{S}\bra{v_{xy}},
\end{equation}
where $f\left(p,y\right)=0,1$ arranges the different sign choices for each eigenvector and $p\in\left[1,2^{d-1}\right]$ is an index determining which of the $2^{d-1}$ phase changes is performed. We also consider the permutation of all $n$ states so that the states we encode with are symmetric. Consequently we want to use the $n!2^{n\left(d-1\right)}$ states of $\left(S^\prime,E\right)$'s to encode our message. The state of each input after the quantum channel $\Psi$ can be written as
\be
\rho_{\pi,P}^{\left(n\right)}=\otimes_{a=1}^n {\rm Tr}_{W C'}\left(\Psi\otimes \mathcal{I}\right)F_{\pi\left(a\right) P_{\pi\left(a\right)}}\left[\ket{\phi_{\pi\left(a\right)}}\bra{\phi_{\pi\left(a\right)}}\right].
\ee
The state is indexed by $\pi$ and $P$:
$\pi$ is one of the $n!$ permutations and $\pi(a)$ gives the index after the permutation at the location $a$; $P$ is a vector of length $n$, with each component $P_m\in\left[1,2^{d-1}\right], m\in[1,n]$ denoting the choice of the phase flips. Here we have used the notation that $U\left[\rho\right]\equiv U \rho U^\dagger$. $ \mathcal{I} $ acts on $ \left(E,W,C^\prime\right) $. The trace is over $\left(W,C^\prime\right)$ such that the state is for $\left(B, E\right)$.

We use HSW theorem~\cite{HSW1_sup,HSW2_sup} for the capacity $n\chi_L$ of $n$ uses of the channel, sending each $\rho_{\pi,P}^{\left(n\right)}$ with equal probability $1/n!2^{n\left(d-1\right)}$,
\be
n\chi_L=S\left(\frac{1}{n!2^{n\left(d-1\right)}}\sum_{\pi,P} \rho_{\pi,P}^{\left(n\right)}\right)-\frac{1}{n!2^{n\left(d-1\right)}}\sum_{\pi,P} S\left(\rho_{\pi,P}^{\left(n\right)}\right)
\label{chin}
\ee
The conditional term is simply $n$ times the original one in a single channel use, because $F_{xp}$ does not change the reduced density matrix $\sigma_x$ of $S^\prime$, thus commute with $ \Psi $ and the entropy of a product state is the sum of each reduced density matrix, \textit{i.e.},
\ba
&&S\left(\rho_{\pi,P}^{\left(n\right)}\right)\nonumber\\
&=&S\left(\otimes_{a=1}^n {\rm Tr}_{W C'}\left(\Psi\otimes \mathcal{I}\right)F_{\pi\left(a\right) P_{\pi\left(a\right)}}\left[\ket{\phi_{\pi\left(a\right)}}\bra{\phi_{\pi\left(a\right)}}\right]\right)\nonumber\\
&=&\sum_{a=1}^n S\left({\rm Tr}_{W C'}F_{\pi\left(a\right) P_{\pi\left(a\right)}}\left(\Psi\otimes \mathcal{I}\right)\left[\ket{\phi_{\pi\left(a\right)}}\bra{\phi_{\pi\left(a\right)}}\right]\right)\nonumber\\
&=&\sum_{a=1}^n S\left({\rm Tr}_{W C'}\left(\Psi\otimes \mathcal{I}\right)\left[\ket{\phi_{\pi\left(a\right)}}\bra{\phi_{\pi\left(a\right)}}\right]\right)\nonumber\\
&=&n\sum_x (\frac{n_x}{n}) S\left({\rm Tr}_{W C'}\left(\Psi\otimes \mathcal{I}\right)\left[\ket{\Phi_{\mbox{$\varepsilon_x$}}}\bra{\Phi_{\mbox{$\varepsilon_x$}}}\right]\right)\nonumber\\
&=&n\sum_x p\left(x\right)S\left(\rho_{N^\prime C^\prime W}^{\left(x\right)}\right).
\label{conditionaln}
\ea
Here we have used $F_{\pi\left(a\right) P_{\pi\left(a\right)}}$ is unitary and doesn't change the von Neumann entropy; $S\left({\rm Tr}_{W C'}\left(\Psi\otimes \mathcal{I}\right)\left[\ket{\Phi_{\mbox{$\varepsilon_x$}}}\bra{\Phi_{\mbox{$\varepsilon_x$}}}\right]\right)=S\left({\rm Tr}_{WC^\prime}\rho_{BE W C^\prime}^{\left(x\right)}\right)=S\left(\rho_{BE}^{\left(x\right)}\right)$ and $\left(B, E, W, N^\prime,C^\prime\right)$ is pure conditioned on the encoded message $x$. In the last equality we have used $n_x=np\left(x\right)$, which is true asymptotically. Then the second term in Eqn.~\ref{chin} is simply 
\be
\frac{1}{n!2^{n\left(d-1\right)}}\sum_{\pi,P} S\left(\rho_{\pi,P}^{\left(n\right)}\right)=n\sum_x p\left(x\right) S\left(\rho_{N^\prime C^\prime W}^{(x)}\right).
\ee
For the first term, first we need to use the same argument as in Ref.~\onlinecite{Shor04_sup} to sum over the phase flips,
\ba
&&\frac{1}{2^{n\left(d-1\right)}}\sum_{P} \rho_{\pi,P}^{\left(n\right)}\nonumber\\
&=&\frac{1}{2^{n\left(d-1\right)}}\otimes_{a=1}^n \sum_{P_{\pi(a)}}{\rm Tr}_{W C^\prime}\left(\Psi\otimes\mathcal{I}\right)
\nonumber
\\
&&\ \ \ \ \ \ \ \ \ \ \ \ \ \ \circ F_{\pi\left(a\right) P_{\pi\left(a\right)}}\left[\ket{\phi_{\pi\left(a\right)}}\bra{\phi_{\pi\left(a\right)}}\right]\nonumber\\
&=&\otimes_{a=1}^n  \sum_y {\lambda_{\pi\left(a\right)y}}\Psi\left[\ket{v_{\pi\left(a\right)y}}_{S}\bra{v_{\pi\left(a\right)y}}\right]
\nonumber\\
&&\ \ \ \ \ \ \ \ \otimes {\rm Tr}_{W C^\prime}\ket{v_{\pi\left(a\right)y}}_{E W C^\prime} \bra{v_{\pi\left(a\right)y}}\equiv \rho_{\pi}^{\left(n\right)}.
\nonumber
\ea
Now we need to further sum over the permutation. Since $n$ is large, by understanding the eigenvalue as a probability, and introducing another classical variable to denote which of eigenvalue (the same argument as Eqn.~8 in Ref.~\onlinecite{Shor04_sup}), asymptotically we have 
\begin{align}
&S\left(\frac{1}{n!}\sum_{\pi} \rho_{\pi}^{\left(n\right)}\right)&\nonumber\\
&=S\left(\frac{1}{n!}\sum_{\pi}  \otimes_{a=1}^n  \sum_y {\lambda_{\pi\left(a\right)y}}\Psi\left[\ket{v_{\pi\left(a\right)y}}_{S}\bra{v_{\pi\left(a\right)y}}\right]\right)&\nonumber\\
&+S\left(\frac{1}{n!}\sum_{\pi}\otimes_{a=1}^n \sum_y {\lambda_{\pi\left(a\right)y}}
{\rm Tr}_{WC^\prime}\ket{v_{\pi\left(a\right)y}}_{E W C^\prime} \bra{v_{\pi\left(a\right)y}}\right)&
\end{align}
Note that $\sum_y {\lambda_{\pi\left(a\right)y}}{\rm Tr}_{WC^\prime}\ket{v_{\pi\left(a\right)y}}_{EWC^\prime} \bra{v_{\pi\left(a\right)y}}=\rho_{E}$ does not depend on the permutation. So the second term above simply equals $nS\left(\rho_{E}\right)=nS\left(\rho_{SW}\right)$.
By Lemma.1 in Ref.~\onlinecite{Shor04_sup}, the first term asymptotically equals $nS\left(\bar{\rho}\right)$, where $\bar{\rho}$ is the average density matrix of $S^\prime$ after the quantum channel $\Psi$, which is exactly $\sum_x p\left(x\right) \Psi\left[\sigma_x\right]=\rho_{B}$. Thus asymptotically,
\be
S\left(\frac{1}{n!}\sum_{\pi} \rho_{\pi}^{\left(n\right)}\right)=n\left(S\left(\rho_{SW}\right)+S\left(\rho_{B}\right)\right)
\label{unconditionaln}
\ee

Combining the conditional term Eqn.~\ref{conditionaln} and the unconditional term Eqn.~\ref{unconditionaln}, we conclude that the information per channel use equals $\chi_L^{\rm UB}\left(\Psi\right)$ in Eqn.~\ref{upperbounddef} in the main paper. Thus we have proved that the upper bound is achievable.

\subsection{Additivity of the upper bound}
\label{additivity}


In this section, we want to show that the upper bound is additive.  For now we consider the bipartite constraint~\ref{Mform2} in the main paper, thus the constraints are $V^{\left(M\right)}=\left\{\,V_m, m\in[1,M]\,\right\}$, where
$
V_m=\left\{\,\left(P_{X_m}\left(\cdot\right), \varepsilon_{\cdot}\right)\in \mathbb{G}, \rho_{S_m}\in \mathcal{B}\left(\mathcal{H}_S\right),
{\bf Q}\left(\rho_{S_m,W_m}\right)\ge  {\bf y}\,\right\}
$
is the constraint on $m$th channel use.
Note that here we only put constraint on the marginal distribution of the encoding in each channel use. In general, we allow the encoding between different channel uses to be arbitrary LOCC, \textit{i.e.}, they can be classically correlated, satisfying some joint distribution $P_{\bf B}\left(\cdot\right)$. Conditioning on the message ${\bf x}\equiv\left(x_1,\cdots, x_M\right)$, the encoding operation is a tensor product of operations $\otimes_{m=1}^M \varepsilon_{x_m}$. We write the upper bound for Holevo information for $M$ channel uses with LOCC encoding as 
\begin{align}
&\chi_L^{\rm UB}\left(\Psi^{\otimes M}\right)=&
\nonumber
\\
&\max_{V^{\left(M\right)}}
 \left\{\,
S\left(\rho_{\bf B}\right)-\sum_{\bf x} P_{\bf X}\left({\bf x}\right) E_{\left(\otimes_{m=1}^M \Phi^{\left(x_m\right)}\otimes\mathcal{I}\right)}\left[\rho_{\bf S,\bf W}\right]\,\right\},&
\label{upperboundM}
\end{align}
where the CPTP map $\Phi^{\left(x\right)}$ is defined as the map from $\rho_{S}$ to $\rho_{N^\prime C^\prime}^{\left(x\right)}$.
First, from subadditivitry of von-Neumann entropy we have
\be
S\left(\rho_{\bf B}\right)\le \sum_{m=1}^M S\left(\rho_{B_m}\right),
\label{entropy_sub}
\ee
with equality when $\rho_{\bf B}$ is in a product state.
By Eqn.~\ref{superadditivty} in theorem~\ref{super_additivity_theorem} proven later, we obtain
\be
E_{\left(\otimes_{m=1}^M \Phi_m^{\left(x_m\right)}\right)\otimes I}\left[\rho_{\bf S,\bf W}\right] \ge \sum_{m=1}^M E_{\Phi_m^{\left(x_m\right)}\otimes\mathcal{I}}\left[\rho_{S_m,W_m}\right],
\label{entropy_gain_sup}
\ee
with equality when  $\left(S_m, W_m\right)$'s are conditionally in a product state for different $m$.
Combining Eqn.~\ref{entropy_sub} and Eqn.~\ref{entropy_gain_sup} and consider the marginal probability $P_{X_m}\left(x_m\right)=\sum_{{\bf x}\backslash x_m} P_{\bf X}\left({\bf x}\right)$, we have an upper bound for Eqn.~\ref{upperboundM} as follows
\be
\chi_L^{\rm UB}\left(\Psi^{\otimes M}\right) \le \max_{V^{\left(M\right)}} \sum_{m=1}^M F\left[\rho_{S_m,W_m},\left(\,P_{X_m}\left(\cdot\right),\varepsilon_\cdot\,\right)\right],
\ee
where $F$ is defined in Eqn.~\ref{upperbound1} in the main paper.
Due to the special form of the constraints $V^{\left(M\right)}$, which can be written as identical separate constraints on each term involved in the above summation, we obtain the upper bound as separate maximization on each channel use:
\be
\chi_L^{\rm UB}\left(\Psi^{\otimes M}\right) \le\sum_{m=1}^M \max_{V_m}F\left[\rho_{S_m,W_m},\left(\,P_{X_m}\left(\cdot\right),\varepsilon_\cdot\,\right)\right].
\label{bound}
\ee
The maximum can be achieved when $S_m$'s are in a product state with each other and are conditionally independent given $\bf W$. Note that this maximum has M terms, each term with identical constraint and expression with Eqn.~\ref{upperbound1} in the main paper, consequently we have $\chi_L^{\rm UB}\left(\Psi^{\otimes M}\right)\le M \chi_L^{\rm UB}\left(\Psi\right)$. While we must have $\chi_L^{\rm UB}\left(\Psi^{\otimes M}\right)\ge M \chi_L^{\rm UB}\left(\Psi\right)$, which is given by independent use as a lower bound. Consequently we obtain the additivity of the upper bound for Eqn.~\ref{upperbound1} in the main paper given the constraints $V$ and the $M$-channel-use constraints $V^{\left(M\right)}$.
\subsection{Alternative upper bound}
\label{alternative_UB}
If we consider the alternative constraint right above~\ref{Mform2} in the main paper (which is also given here in Eqn.~\ref{Mform2_supp}) and define $V^{\left(M\right)\prime }=\left\{\,V_m', m\in[1,M]\,\right\}$, where
$
V_m'=\left\{\,\left(P_{X_m}\left(\cdot\right),\varepsilon_\cdot\right)\in \mathbb{G}, \rho_{S_m}\in \mathcal{B}\left(\mathcal{H}_S\right),\right.
\left.{\bf Q}\left(\rho_{S_m,\bf W}\right)\ge {\bf y}\right\}
$ is the constraint on $m$th channel use.
By Eqn.~\ref{superadditivty2} in the superadditivity theorem~\ref{super_additivity_theorem2} proven later, we obtain
\be
E_{\left(\otimes_{m=1}^M \Phi_m^{\left(x_m\right)}\right)\otimes\mathcal{I}}\left[\rho_{\bf S,\bf W}\right] \ge \sum_{m=1}^M E_{\Phi_m^{\left(x_m\right)}\otimes\mathcal{I}}\left[\rho_{S_m,\bf W}\right],
\label{entropy_gain_sup2}
\ee
with equality when the $S_m$'s are conditionally independent given $\bf W$.
Combining Eqn.~\ref{entropy_sub} and Eqn.~\ref{entropy_gain_sup2} and consider the marginal probability $P_{X_m}\left(x_m\right)=\sum_{{\bf x}\backslash x_m} P_{\bf X}\left({\bf x}\right)$, we have an upper bound for Eqn.~\ref{upperboundM} as follows
\begin{align}
&\chi_L^{\rm UB}\left(\Psi^{\otimes M}\right) \le \max_{V^{\left(M\right)\prime}} \sum_{m=1}^M F\left[\rho_{S_m,\bf W},\left(\,P_{X_m}\left(\cdot\right),\varepsilon_\cdot\,\right)\right],&
\nonumber
\\
&\le  \sum_{m=1}^M \max_{V_m^\prime}  F\left[\rho_{S_m,\bf W},\left(\,P_{X_m}\left(\cdot\right),\varepsilon_\cdot\,\right)\right]&
\label{alter_bound}
\end{align}
where $F$ is defined in Eqn.~\ref{upperbound1} in the main paper with slight modification of the identity channel's dimension.

\subsection{Equivelence of two constraints}
In the main paper, we have two constraints (introduced around Eqn.~\ref{Mform2} in the main paper). The first one is
\be
{\bf Q}\left(\rho_{S_m\bf W}\right)\ge  {\bf y}, m\in[1,M].
\label{Mform1_supp}
\ee 
The second one is
\be
{\bf Q}\left(\rho_{S_mW_m}\right)\ge  {\bf y}, m\in[1,M].
\label{Mform2_supp}
\ee
We show that constraints \ref{Mform1_supp},~\ref{Mform2_supp} both give the same classical capacity. When constraint~\ref{Mform1_supp} is adopted, we have the upper bound Eqn.~\ref{alter_bound}; when constraint Eqn.~\ref{Mform2_supp} is adopted, we have the upper bound Eqn.~\ref{bound}. 
If we replace the notation ${\bf W} \leftrightarrow W_m$ in the constraint $V_m$ in the maximization of Eqn.~\ref{bound} and the constraint $V_m^\prime$ in the maximization of Eqn.~\ref{alter_bound}, we find that Eqn.~\ref{bound} and Eqn.~\ref{alter_bound} are the same, thus giving the same capacity.

\section{Generalized covariant channels}
\begin{figure}
\subfigure{
\includegraphics[width=0.25\textwidth]{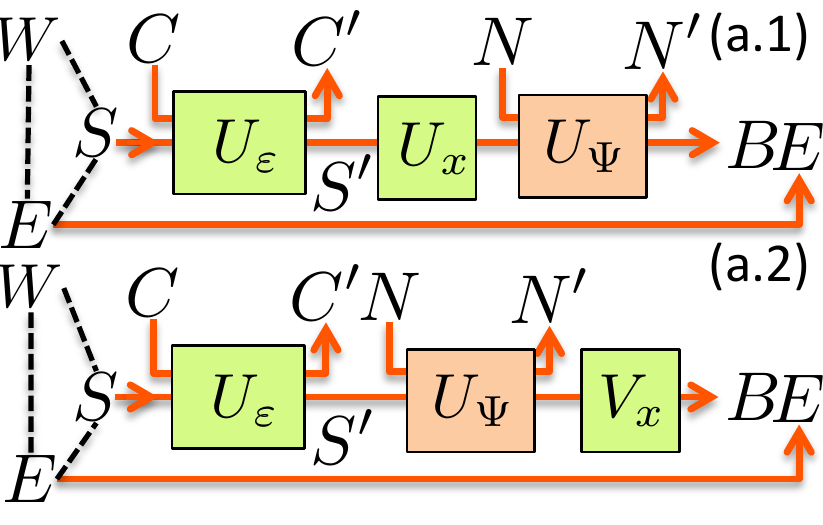}
\label{covariant_general_fig}
}
\subfigure{
\includegraphics[width=0.15\textwidth]{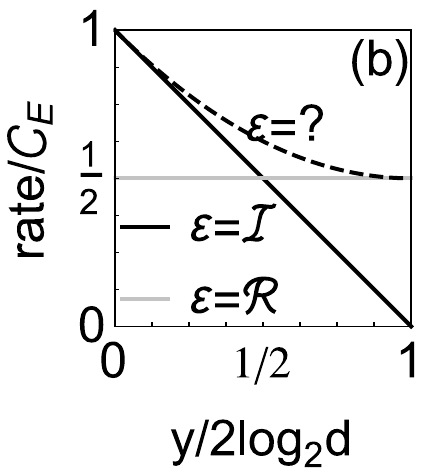}
\label{capacity_QEC_fig}
}
\caption{(a) Covariant channels. (b) capacities for QEC.
}
\end{figure}

To prove corollary~\ref{channel_capacity_theorem_covariant_finite} in the main paper, We first prove the following lemma.

\noindent
{\it {\bf Lemma.1}
For $d$ dimensional generalized covariant channel $\Psi$, $S(\Psi(\rho))\le S(\Psi(I/d))$ for all qudit state $\rho$.
}

\begin{proof}
For generalized covariant channel, consider the set $G_U\left(\Psi \right)$. For arbitrary qudit state $\rho$, we have
\begin{align}
S(\Psi(I/d))&=S\left(\sum_{U_x\in G_U\left(\Psi \right)}\frac{1}{d^2}\Psi\left(U_x\rho U_x^\dagger\right)\right)\nonumber\\
&\geq \frac{1}{d^2}\sum_{U_x\in G_U\left(\Psi \right)} S\left(\Psi\left(U_x\rho U_x^\dagger\right)\right)\nonumber\\
&=S\left(\Psi\left(\rho\right)\right).
\end{align}
\end{proof}

Now we prove corollary~\ref{channel_capacity_theorem_covariant_finite} in the main paper.

\begin{proof}
For input $\rho_{SW}$, let $\varepsilon^\star=\argmin_\varepsilon E_{\Phi_{\mbox{$\varepsilon$}}\otimes \mathcal{I}}\left[\rho_{SW}\right]$. $S\left(\rho_B\right)=S\left(\Psi\left(\sum_x P_X\left(x\right) \varepsilon_x \left[\rho_{S}\right]\right)\right)\le S\left(\Psi\left(I/d\right)\right)$ since $\Psi$ is generalized-covariant, due to lemma~1 proven above. Consequently, Eqn.~\ref{capacity_full} in the main paper is upper bounded by Eqn.~\ref{capacity_covariant_finite} in the main paper.

Now we show Eqn.~\ref{capacity_covariant_finite} in the main paper is achieved by the $\varepsilon^\star-$unitary encoding scheme, \textit{i.e.},   $\varepsilon^\star$ followed by $U_x\in G_U\left(\Psi\right)$ with equal probability (see Fig.~\ref{covariant_general_fig}). Because applying $U_x$'s with equal probability is equivalent to applying a fully depolarizing channel, which disentangles the input from other parties, the first term of Eqn.~\ref{HSW} in the main paper equals $ S\left(\rho_B\right)+S\left(\rho_E\right) $ and equality is obtained in Eqn.~\ref{upperbounddef} in the main paper. So from Eqn.~\ref{capacity_full} in the main paper, the capacity is given by 
\be
\chi_L\left(\Psi\right)=\max_{V}
F\left[\rho_{SW},\left(\,P_X\left(\cdot\right),\varepsilon_\cdot\,\right)\right],
\ee
with $F$ function in Eqn.~\ref{upperbound1} in the main paper simplified, due to the equivalence of Fig.~3(a.1),(a.2), to
\be
F\left[\rho_{SW},\left(\,P_X\left(\cdot\right),\varepsilon_\cdot\,\right)\right]=
S\left(\rho_B\right)-E_{\Phi_{\mbox{$\varepsilon^\star$}}\otimes \mathcal{I}}\left[\rho_{SW}\right],
\label{capacity_covariant_sup}
\ee
where $\rho_B=\Psi(\sum_{U_x\in G_U\left(\Psi\right)} U_x \varepsilon^\star\left[\rho_S\right]U_x^\dagger/d^2)=\Psi\left(I/d\right)$. Eqn.~\ref{capacity_covariant_finite} in the main paper is achieved after maximizing over $\rho_{SW}$.
\end{proof}

For covariant channels, the connection between our scenario and superdense coding on mixed states~\cite{non_unitary_optimum_sup} can be understood in Fig.~3(a.2). 
Note that Eqn.~\ref{capacity_covariant_sup} holds even if $\varepsilon^\star$ is not optimal, even for CV systems. However, corollary \ref{channel_capacity_theorem_covariant_finite} in the main paper only holds for DV systems.

Next, we give some examples of generalized covariant channels $\Psi$ and discuss $\chi_L\left(\Psi\right)$ given by corollary \ref{channel_capacity_theorem_covariant_finite} in the main paper and its lower bound $\chi_L^{\mathcal{I}}$ for unitary encodings.

\subsection{Quantum erasure channel}
Quantum erasure channel~(QEC)~\cite{QEC_sup} is the direct analog of the classical erasure channel. For input $\rho\in \mathcal{B}\left(\mathcal{H}\right)$, define an ``error'' state $\ket{e}\notin \mathcal{H}$, QEC is the CPTP map
\be
\Psi\left[\rho\right]=\left(1-\epsilon\right) \rho+\epsilon\ket{e}\bra{e},
\ee
where $\epsilon$ is the probability of erasure from the original state to the ``error'' state.
As
\be
\Psi\left[U\rho U^\dagger\right]=U\oplus I_e\left(\left(1-\epsilon\right) \rho+\epsilon\ket{e}\bra{e}\right)U^\dagger\oplus I_e,
\ee
QEC is covariant. We allow arbitrary input, \textit{i.e.}, $\mathcal{H}_S$ is the one qudit Hilbert space, and arbitrary encoding. 

The classical capacity given by Eqn.~\ref{capacity_covariant_finite} in the main paper can also be written explicitly as (schematic in Fig.~\ref{covariant_general_fig}2)
\be 
\chi_L\left(\Psi\right)= S\left(\Psi\left(I/d\right)\right)-\min_{\varepsilon,\rho_{SW}} \left(S\left(\rho_{N^\prime C^\prime W}\right)-S\left(\rho_{SW}\right)\right).
\label{covariant_general}
\ee 
The first term can be straightforwardly calculated as $S\left(\Psi\left(I/d\right)\right)=H_2\left(\epsilon\right)+\left(1-\epsilon\right)\log_2 d$,
where $H_2\left(\epsilon\right)=-\epsilon\log_2\epsilon-\left(1-\epsilon\right)\log_2\left(1-\epsilon\right)$ is the binary entropy function. Since $\rho_{N^\prime C^\prime W}=\Psi^c \otimes \mathcal{I}\left(\rho_{S^\prime C^\prime W}\right)$, where $\Psi^c$ is the complement of QEC, which is a QEC with parameter $1-\epsilon$. So 
\be
S\left(\rho_{N^\prime C^\prime W}\right)=H_2\left(1-\epsilon\right)+\epsilon S\left(\rho_{S^\prime C^\prime W}\right)+\left(1-\epsilon\right)S\left( \rho_{C^\prime W}\right)
.
\ee 
Note $H_2\left(1-\epsilon\right)=H_2\left(\epsilon\right)$ and $S\left(\rho_{S^\prime C^\prime W}\right)=S\left(\rho_{SW}\right)$, Eqn.~\ref{covariant_general} equals
\ba 
\chi_L\left(\Psi\right)
&=& \max_{\varepsilon,\rho_{SW}}\left(1-\epsilon\right)\left(\log_2 d- \left(S\left(\rho_{C^\prime W}\right)-S\left(\rho_{S W}\right)\right)\right)
\nonumber
\\
&=& \max_{\varepsilon,\rho_{SW}}\left(1-\epsilon\right)\left(\log_2 d- E_{\varepsilon^c\otimes \mathcal{I}}\left[\rho_{SW}\right]\right),
\label{covariant_general_QEC_sup}
\ea 
where the constraint is  ${\bf Q}\left(\rho_{SW}\right)\ge{\bf y}$. However, since the maximization is over both the input state and the quantum operation for state preparation, it is in general difficult to calculate the maximum.

We calculate the lower bound $\chi_L^{\mathcal{I}}\left(\Psi\right)$ by taking $\varepsilon=\mathcal{I}$,
\be 
\chi_L^{\mathcal{I}}\left(\Psi\right)= \max_{{\bf Q}\left(\rho_{SW}\right)\ge  {\bf y}}\left(1-\epsilon\right)\left(\log_2 d- S\left(\rho_W\right)+S\left(\rho_{SW}\right)\right).
\label{covariant_general_QEC_sup_2}
\ee 
If we choose $ {\bf Q} $ to be quantum mutual information, then under constraint $Q\left(\rho_{SW}\right)\ge y$,  Eqn.~\ref{covariant_general_QEC_sup_2} is maximum when $Q\left(\rho_{SW}\right)=y$ and $ \rho_{S}=I/d $. (For any $Q\left(\rho_{SW}\right)= y$, we can always choose the reduced density matrix to be $ \rho_{S}=I/d $ with Bell-diagonal states \cite{Bell_diagonal_sup}.) Finally, we have
\be
\chi_L^{\mathcal{I}}\left(\Psi\right)=\mathcal{C}_E\left(1-y/\left(2\log_2 d\right)\right),
\ee
where $\mathcal{C}_E=\left(1-\epsilon\right)2\log_2 d$ is the entanglement-assisted classical capacity for QEC~\cite{Wilde_book_sup}. When $ y=0 $, since $ S $ and $ E $ can be fully entangled, we recover the entanglement-assisted classical capacity.

For the QEC, as discussed in the main paper, here we plot $\chi_L^{\mathcal{I}}$ and $\mathcal{C}^{(1)}$ in Fig.~\ref{capacity_QEC_fig}: at $y=2\log_2 d$, $\rho_{SW}$ is maximally entangled thus $\rho_S=I/d$, $\chi_L=\mathcal{C}^{(1)}$ while $\chi_L^{\mathcal{I}}=0$; at $y=0$, $\chi_L=\chi_L^{\mathcal{I}}=\mathcal{C}_E$. These two points are generic for all channels, it is open what $\varepsilon$ allows $\chi_L\left(\Psi\right)$ to exceed $\max\left[\chi_L^{\mathcal{I}}, \mathcal{C}^{(1)}\right]$, as indicated by the dashed curve in Fig.~\ref{capacity_QEC_fig}. 

\subsection{Quantum depolarizing channel}
Quantum depolarizing channel (QDC)~\cite{Chuang_book_sup} is the direct analog to the classical binary symmetric channel. The $d$ dimensional QDC is defined by the CPTP map
\be
\Psi\left[\rho\right]=\frac{\lambda}{d}I+(1-\lambda)\rho.
\ee
As 
$
\Psi\left[U\rho U^\dagger\right]=U\Psi\left[\rho\right]U^\dagger,
$
$\Psi$ is covariant.

Thus we consider arbitrary input and $ Q\left(\rho_{SW}\right)\ge  y$ with $ Q $ the quantum mutual information.
The classical capacity $\chi_L$ given by Eqn.~\ref{capacity_covariant_finite} in the main paper is still difficult to obtain. So we take $\varepsilon=\mathcal{I}$ to obtain lower bound $\chi_L^{\mathcal{I}}$ by
\be 
\chi_L^{\mathcal{I}}\left(\Psi\right)= \max_{Q\left(\rho_{SW}\right)\ge  y}S\left(\Psi\left(I/d\right)\right)-E_{\Psi^{c}\otimes \mathcal{I}}\left[\rho_{SW}\right].
\label{Fun_new}
\ee 

The first term can be calculated easily $S\left(\Psi\left(I/d\right)\right)=\log_2d$.
The second term 
\be
E_{\Psi^{c}\otimes \mathcal{I}}\left[\rho_{SW}\right]=S\left( \Psi^{c}\otimes \mathcal{I}\left[\rho_{SW}\right]\right)-S\left(\rho_{SW}\right)
\ee
needs to be minimized under $Q\left(\rho_{SW}\right)\ge  y$.
We will use the fact that $Q\left(\rho_{SW}\right)$ is invariant under local unitaries and channel $\Psi^{c}$ is covariant~\cite{DP_complementary_sup,complementary_covariant_sup}. 

\begin{figure}
\includegraphics[width=0.3\textwidth]{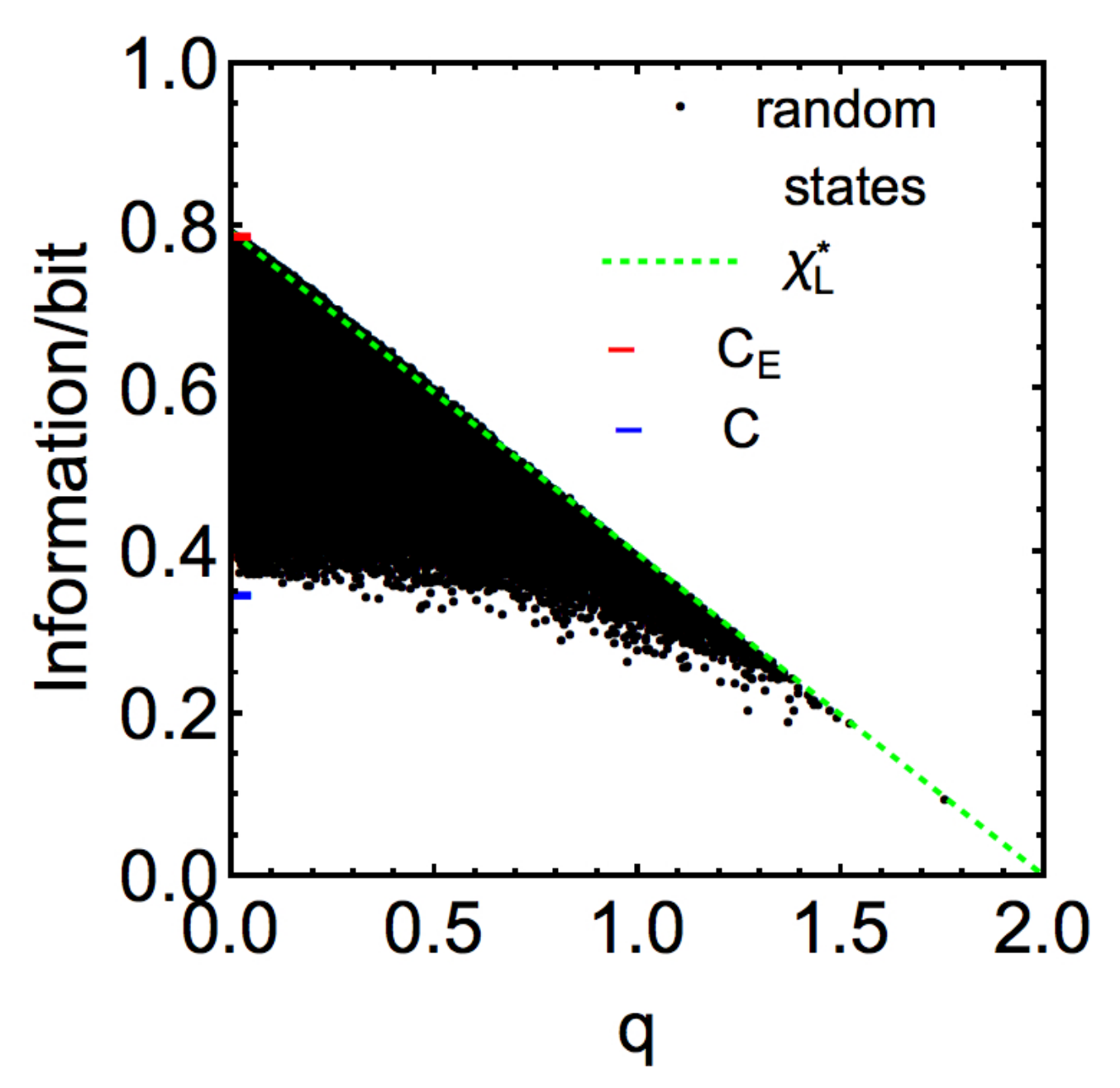}
\caption{ Qubit depolarizing channel. General random quantum state $F\left[\cdot\right]$ vs quantum mutual information $q$, sample size $10^6$.
\label{depolarizing_channela}
}
\end{figure}

Analytical result is difficult to obtain, here we consider the case of $d=2$ (\textit{i.e.}, single qubit) and obtain some numerical results. Without loss of generality, in the maximization of Eqn.~\ref{Fun_new} we consider the reduced density matrices $\rho_{S},\rho_{W}$ both diagonal. We can also apply local phase gate to choose the phase. As such, we can express the generic density matrix in the bases of $\ket{00}_{SW},\ket{01}_{SW},\ket{10}_{SW},\ket{11}_{SW}$ as
\be
\rho_{SW}=\left( 
\begin{array}{cccc}
a & c_3e^{-i\alpha}  & c_1e^{-i\beta} & r_1\\
c_3e^{i\alpha}  & p_2-a & r_2e^{i\theta} & -c_1e^{-i\beta} \\
c_1e^{i\beta} & r_2e^{-i\theta} & p_1-a & -c_3e^{-i\alpha} \\
r_1 & -c_1e^{i\beta} & -c_3e^{i\alpha} &1+a-p_1-p_2
\end{array} \right).
\ee 
where $\theta,\alpha,\beta\in[0,2\pi)$, $\max\left(0,p_1+p_2-1\right)\le a\le \left(p_1,p_2\right)\le 1$ and $1/2\ge r_1,r_2,c_1,c_3\ge 0$. There are ten variables $\left\{\theta,\alpha,\beta,c_1,c_3,r_1,r_2,p_1,p_2,a\right\}$. We randomly select those variables to generate random quantum states and calculate the quantum mutual information $q$ and $F\left[\cdot\right]$ function being maximized over in Eqn.~\ref{Fun_new}. To calculate $F\left[\cdot\right]$, we use the complementary channel $\Psi^{c}$ given in Ref.~\onlinecite{Leung_2015_sup} and explicitly obtain the output density matrix. The reuslts are in Fig.~\ref{depolarizing_channela}. The maximum at each $q$ fits well with $\chi_L^\star\equiv\mathcal{C}_E\left(1-q/\left(2\log_2 d\right)\right)$,
where $
\mathcal{C}_E=2+\left(1-\frac{3\lambda}{4}\right)\log_2\left(1-\frac{3\lambda}{4}\right)+\frac{3\lambda}{4}\log_2\frac{\lambda}{4}
$~\cite{Bennett_1999_sup}. We do notice a minute deviation (but significant compared with numerical precision) present in Fig.~\ref{depolarizing_channela} from $\chi_L^\star$, which will be studied in the future.

We also compare with its classical capacity~\cite{Bennett_1999_sup,King_2003_sup} 
$
\mathcal{C}=1+\left(1-\frac{\lambda}{2}\right)\log_2\left(1-\frac{\lambda}{2}\right)+\frac{\lambda}{2}\log_2\frac{\lambda}{2}.
$ We see no states below $\mathcal{C}$ for $q=0$. This means that entanglement always increases the amount of information that can be conveyed.

\subsection{Dephasing channel}
The qubit phase damping channel~(QPDC)~\cite{Chuang_book_sup} is defined by the CPTP map on the qubit density matrix 
\be
\Psi\left[\left( \begin{array}{cc}
p & b \\
b^\star & 1-p\end{array} \right)
\right]=\left( \begin{array}{cc}
p & \sqrt{1-\gamma}b \\
\sqrt{1-\gamma}b^\star & 1-p\end{array} \right).
\ee
One can show that the covariant group of QPDC are generated by phase gates ${Ph}\left(\cdot\right)$ and Pauli-X.  $\chi_L\left(\Psi\right)$ is given by corollary \ref{channel_capacity_theorem_covariant_finite} in the main paper, while further calculation will be performed in the future study.

\section{Entropy gain theorems}
First, let's review the quantum data processing inequality~\cite{quantum_data_processing_sup}. Here we also derive a conditional mutual information version of data processing inequality.
\begin{theorem}
{\rm (quantum data processing inequality.)}
For bipartite system $AB$, local quantum operations $\phi_A$ on sub-system $A$ and $\phi_B$ on sub-system $B$ we have
\begin{equation}
I\left(\phi_A\left[A\right]:\phi_B\left[B\right]\right)\le I\left(A:B\right).
\label{data_process1}
\end{equation}
For a composite system $ABC$, consider local quantum operations $\phi_A$ on sub-system A and $\phi_B$ on sub-system B. Suppose $C$ is unchanged, we can obtain a conditional version of the above inequality
\begin{equation}
I\left(\phi_A\left[A\right]:\phi_B\left[B\right]|C\right)\le I\left(A:B|C\right),
\label{data_process2}
\end{equation}

\label{data_process_theorem}
\end{theorem}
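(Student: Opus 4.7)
The plan is to derive both inequalities from Stinespring dilations of $\phi_A$ and $\phi_B$ combined with the subadditivity (resp.\ strong subadditivity) of von Neumann entropy. Since $\phi_A$ and $\phi_B$ act on disjoint subsystems, it suffices to prove each one-sided bound separately and chain them; I therefore focus on showing that a local operation $\phi_A$ on $A$ alone does not increase the (conditional) mutual information with $B$.

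For Eqn.~\ref{data_process1}, I would write $\phi_A[\,\cdot\,]=\mathrm{Tr}_{A'}\bigl[V(\,\cdot\,\otimes\ket{0}\!\bra{0}_{A'})V^\dagger\bigr]$ for some unitary $V$ on $A A'$ with ancilla $A'$ initialized in a pure state. Three observations then do the work: (i) because the ancilla is initially uncorrelated with $B$, $I(AA':B)=I(A:B)$; (ii) $V$ acts only on $AA'$, so the marginal entropies entering $I(AA':B)$ are unchanged, hence the mutual information is $V$-invariant; (iii) subadditivity of von Neumann entropy applied after $V$ implies that discarding $A'$ cannot increase mutual information, giving $I(\phi_A[A]:B)\le I(AA':B)$. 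Chaining with the analogous argument for $\phi_B$ then yields Eqn.~\ref{data_process1}.

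For Eqn.~\ref{data_process2}, I would run the same template while keeping $C$ as a spectator throughout, so that every entropy appearing is taken jointly with $C$. Steps (i) and (ii) carry over verbatim; the new ingredient is step (iii), which now requires the conditional mutual information $I(\,\cdot\,:B|C)$ to be monotone under tracing out $A'$. This is precisely strong subadditivity, i.e.\ $I(AA':B|C)\ge I(A:B|C)$. Combining and then chaining with the $\phi_B$ bound produces Eqn.~\ref{data_process2}.

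The only genuinely nontrivial input is strong subadditivity in the conditional case; the rest---Stinespring dilation, invariance under tensoring a pure ancilla, and unitary invariance of entropy---is routine, and the main care needed is bookkeeping of pre- and post-dilation subsystems so that the invariance and monotonicity steps apply to consistent bipartitions.
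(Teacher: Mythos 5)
Your step (iii) in the unconditional case is misattributed, and this is the one real gap: ``discarding $A'$ cannot increase mutual information'' is not a consequence of ordinary subadditivity. Writing it out, $I(AA':B)-I(A:B)=S(AA')+S(AB)-S(AA'B)-S(A)=I(A':B|A)\ge 0$, which is precisely strong subadditivity (Lieb--Ruskai); plain subadditivity only gives $I(A:B)\ge 0$. Since monotonicity of mutual information under a partial trace is equivalent to strong subadditivity, there is no elementary route around it, so your closing remark that strong subadditivity is only needed for the conditional inequality is incorrect. The fix is immediate: invoke strong subadditivity in both parts (equivalently, obtain Eqn.~\ref{data_process1} as the special case of Eqn.~\ref{data_process2} with trivial $C$). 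Your steps (i) and (ii) --- a pure, uncorrelated ancilla leaves the (conditional) mutual information unchanged, and the dilation unitary on $AA'$ preserves every entropy that enters --- are fine, as is chaining the two one-sided bounds for $\phi_A$ and $\phi_B$.

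With that repair your route (Stinespring dilation plus strong subadditivity, keeping $C$, and the untouched party, as spectators) is a correct and standard proof, and it is genuinely different from the paper's. The paper never dilates: it writes $I(A:B)=S\left(\rho_{AB}\|\rho_A\otimes\rho_B\right)$ and expresses $I(A:B|C)$ as a difference of two relative entropies (the chain rule $I(A:B|C)=I(A:BC)-I(A:C)$ in two different groupings), then applies Uhlmann's monotonicity of relative entropy under CPTP maps, choosing in turn the decomposition in which the subtracted term is unaffected by the channel being removed, so that $\phi_B$ and then $\phi_A$ are peeled off one at a time. The two approaches rest on equivalent pillars (strong subadditivity versus monotonicity of relative entropy); the paper's version trades your ancilla bookkeeping for a careful choice of decomposition that keeps the subtraction under control, while yours needs only entropy accounting once strong subadditivity is granted.
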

We will give a proof here. The nature of quantum processing inequality is subadditivity of von Neumann entropy. One can also think of subadditivity from the Uhlmann's theorem\cite{uhlmann1977_sup, Hayden2004_sup} that any quantum operation $T$ does not increase quantum relative entropy, \textit{i.e.}, 
\be
S\left(\rho\|\sigma\right)\ge S\left(T\left[\rho\right]\|T\left[\sigma\right]\right).
\ee
The mutual information can be expressed by the relative entropy as
\be
I\left(A:B\right)=S\left(\rho_{AB}\| \rho_{A}\otimes \rho_{B}\right).
\label{mutual_int}
\ee
Similarly, the conditional mutual information 
\begin{align}
&I\left(A:B|C\right)=S\left(AC\right)+S\left(BC\right)-S\left(ABC\right)-S\left(C\right)& \nonumber\\
&=S\left(\rho_{ABC}\|\rho_{A}\otimes \rho_{BC}\right)-S\left(\rho_{AC}\|\rho_{A}\otimes \rho_{C}\right)&
\label{mutual_int2}
\\
&=S\left(\rho_{ABC}\|\rho_{B}\otimes \rho_{AC}\right)-S\left(\rho_{BC}\|\rho_{B}\otimes \rho_{C}\right)&
\label{mutual_int3}
\end{align}
Thus consider the operation local on each party $T=\phi_A \otimes \phi_B$, from Eqn.~\ref{mutual_int} we have 
\begin{align}
&I\left(\phi_A\left[A\right]:\phi_B\left[B\right]\right)=S\left(T\left[\rho_{AB}\right]\| T\left[\rho_{A}\otimes \rho_{B}\right]\right)&\\
&\le S\left(\rho_{AB}\| \rho_{A}\otimes \rho_{B}\right)=I\left(A:B\right)&
\end{align}
 This proves Eqn.~\ref{data_process1}. Similarly, from Eqn.~\ref{mutual_int2} we have
\begin{align}
&I\left(\phi_A\left[A\right]:\phi_B\left[B\right]|C\right)=&\nonumber\\
&S\left(\phi_A\otimes\phi_B\otimes\mathcal{I}\left[\rho_{ABC}\right]\|\phi_A\otimes\phi_B\otimes\mathcal{I}\left[\rho_{A}\otimes \rho_{BC}\right]\right)&\nonumber\\
&\ \ \ \ \ -S\left(\phi_A\otimes\mathcal{I}\left[\rho_{AC}\right]\|\phi_A\left[\rho_{A}\right]\otimes \rho_{C}\right)&\\
&\le S\left(\phi_A\otimes\mathcal{I}\otimes\mathcal{I}\left[\rho_{ABC}\right]\|\phi_A\otimes\mathcal{I}\otimes\mathcal{I}\left[\rho_{A}\otimes \rho_{BC}\right]\right)&\nonumber\\
&\ \ \ \ \ -S\left(\phi_A\otimes\mathcal{I}\left[\rho_{AC}\right]\|\phi_A\left[\rho_{A}\right]\otimes \rho_{C}\right)&\\
&=I\left(\phi_A\left[A\right]:B|C\right).&
\end{align}
Repeat the above procedure for $I\left(\phi_A\left[A\right]:B|C\right)$ again, but based on Eqn.~\ref{mutual_int3}, as follows
\begin{align}
&I\left(\phi_A\left[A\right]:B|C\right)=&\nonumber\\
&S\left(\phi_A\otimes\mathcal{I}\otimes\mathcal{I}\left[\rho_{ABC}\right]\| \phi_A\otimes\mathcal{I}\otimes\mathcal{I}\left[\rho_{B}\otimes \rho_{AC}\right]\right)&\nonumber\\
&\ \ \ \ \ -S\left(\rho_{BC}\|\rho_{B}\otimes \rho_{C}\right)&\\
&\le S\left(\rho_{ABC}\|\rho_{B}\otimes \rho_{AC}\right)-S\left(\rho_{BC}\|\rho_{B}\otimes \rho_{C}\right)&\\
&=I\left(A:B|C\right).&
\end{align}
Consequently, combining the above we have proved Eqn.~\ref{data_process2}. Thus we have proved the quantum data processing inequality.

Based on theorem~\ref{data_process_theorem}, we can prove the superadditivity of entropy gain, as given by theorem~\ref{super_additivity_theorem}, \ref{super_additivity_theorem2}. We use the short notation $A_{\left(1:D\right)}$ for $A_1A_2\cdots A_D$.
\begin{theorem}
{\rm (Superadditivity of entropy gain I.) }
Given a $2D$-partite state $\rho_{A_{\left(1:D\right)}B_{\left(1:D\right)}}\in \mathcal{B}\left(\mathcal{H}_A^{\otimes D}\otimes \mathcal{H}_B^{\otimes D}\right)$ and $D$ CPTP maps $\phi_d:\mathcal{B}\left(\mathcal{H}_A\otimes \mathcal{H}_B\right)\to \mathcal{B}\left(\mathcal{H}_A'\otimes\mathcal{H}_B'\right), 1\le d\le D$. Consider the CPTP map $\otimes_{d=1}^D \phi_d: \mathcal{B}\left(\mathcal{H}_A^{\otimes D}\otimes \mathcal{H}_B^{\otimes D}\right)\to \mathcal{B}\left({\mathcal{H}_A'}^{\otimes D}\otimes {\mathcal{H}_B'}^{\otimes D}\right)$. The entropy gain is superadditive,
\begin{equation}
E_{\otimes_{d=1}^D \phi_d}\left[\rho_{A_{\left(1:D\right)}B_{\left(1:D\right)}}\right]\ge \sum_{d=1}^D E_{\phi_d} \left[\rho_{A_{d}B_{d}}\right],
\label{superadditivty}
\end{equation}
with equality achieved when $\rho_{A_{\left(1:D\right)}B_{\left(1:D\right)}}=\otimes_{k=1}^D \rho_{A_kB_k}$
\label{super_additivity_theorem}
\end{theorem}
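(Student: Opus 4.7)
The plan is to recognize that the claimed inequality is equivalent to the monotonicity of the multipartite total correlation under local CPTP maps, which in turn is an immediate consequence of the monotonicity of quantum relative entropy already invoked in Theorem~\ref{data_process_theorem}. Writing $\rho = \rho_{A_{(1:D)}B_{(1:D)}}$, $\rho_d = \rho_{A_dB_d}$, and $\rho' = (\otimes_{d=1}^D\phi_d)[\rho]$ with marginals $\rho'_d = \phi_d[\rho_d]$, the inequality $E_{\otimes_d\phi_d}[\rho] \ge \sum_d E_{\phi_d}[\rho_d]$ rearranges (by adding $\sum_d S(\rho_d)$ to both sides and then moving the marginal-entropy sums across) into
\begin{equation}
\sum_{d=1}^D S(\rho'_d) - S(\rho') \;\le\; \sum_{d=1}^D S(\rho_d) - S(\rho).
\end{equation}
The quantity on each side is the multipartite total correlation $T$ among the $D$ blocks $A_dB_d$, so the claim reduces to $T(\rho') \le T(\rho)$.

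Next I would express the total correlation as a relative entropy,
\begin{equation}
T(\rho) \;=\; S\!\left(\rho \,\Big\|\, \otimes_{d=1}^D \rho_d\right),
\end{equation}
which is a standard identity obtained by expanding $S(\rho\|\sigma) = -S(\rho) - \mathrm{Tr}(\rho\log\sigma)$ with $\sigma = \otimes_d \rho_d$. Then I would apply the CPTP map $\Phi \equiv \otimes_{d=1}^D \phi_d$ to both arguments. Since $\Phi[\rho] = \rho'$ and $\Phi[\otimes_d \rho_d] = \otimes_d \rho'_d$ because the map factorizes, the Uhlmann monotonicity of relative entropy (the same ingredient used in Theorem~\ref{data_process_theorem}) yields
\begin{equation}
T(\rho') = S\!\left(\rho'\,\Big\|\,\otimes_{d=1}^D \rho'_d\right) \;\le\; S\!\left(\rho\,\Big\|\,\otimes_{d=1}^D \rho_d\right) = T(\rho),
\end{equation}
which is precisely the desired inequality. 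As a sanity check, the bipartite case $D=2$ coincides with inequality~\ref{data_process1} applied to the parties $A_1B_1$ and $A_2B_2$, so for $D>2$ one could equally well obtain the result by iterating the bipartite data processing inequality $D-1$ times, splitting off one block at a time; this provides an alternative route that avoids any multipartite machinery.

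The equality case is straightforward: if $\rho = \otimes_d \rho_d$, then $T(\rho)=0$, and since $\Phi$ preserves the product structure, $\rho' = \otimes_d \rho'_d$ and $T(\rho')=0$, so both sides vanish and the entropy gains add exactly. I do not anticipate a serious obstacle in the argument; the only point requiring care is the bookkeeping that translates the entropy-gain inequality into the total-correlation form, and the verification that $\Phi$ sends $\otimes_d \rho_d$ to $\otimes_d \rho'_d$ (which is immediate from the tensor-product form of $\Phi$ and the fact that each $\phi_d$ acts only on its own $A_dB_d$ factor). The analogous Theorem~\ref{super_additivity_theorem2}, needed for the alternative witness constraint, will follow by the same route after identifying the appropriate conditional total correlation and using the conditional data processing inequality~\ref{data_process2}.
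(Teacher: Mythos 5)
Your proof is correct, but it takes a more compact route than the paper's. The paper proves Eqn.~\ref{superadditivty} by iteration: it isolates one block at a time via the two-block inequality (Eqn.~\ref{inequalityk}), rewrites the difference of entropy gains as a difference of bipartite mutual informations (Eqn.~\ref{mutualinfo}), and invokes the quantum data processing inequality Eqn.~\ref{data_process1} a total of $D-1$ times. You instead package the whole statement at once as monotonicity of the total correlation, $\sum_d S(\rho_d)-S(\rho)=S\left(\rho\,\middle\|\,\otimes_d\rho_d\right)$, under the product map $\otimes_d\phi_d$, and apply Uhlmann monotonicity of relative entropy a single time, using that the marginal of the output on block $d$ is $\phi_d[\rho_d]$. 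Both arguments rest on the same fundamental ingredient (relative-entropy monotonicity, which is also how the paper derives theorem~\ref{data_process_theorem}); your version buys brevity and avoids the induction bookkeeping, while the paper's iterative reduction reuses its already-stated bipartite data processing inequality verbatim and transfers with minimal changes to the conditional version needed for theorem~\ref{super_additivity_theorem2} (via Eqn.~\ref{data_process2}), which is the role your closing remark about the conditional total correlation would have to play. Your treatment of the equality case (a product input gives vanishing total correlation before and after the map) matches the sufficiency claim in the statement.
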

\begin{theorem}
{\rm (Superadditivity of entropy gain II.) }

Given a $D$-partite state $\rho_{A_{\left(1:D\right)}R}\in \mathcal{B}\left(\mathcal{H}_A^{\otimes D}\otimes \mathcal{H}_R\right)$ and D CPTP maps $\phi_d: \mathcal{B}\left(\mathcal{H}_A\to \mathcal{H}_A^\prime\right), 1\le d\le D$. Consider the CPTP map $\otimes_{d=1}^D \phi_d: \mathcal{B}\left(\mathcal{H}_A^{\otimes D}\to {\mathcal{H}_A'}^{\otimes D}\right)$ The entropy gain is super-additive,
\begin{equation}
E_{\left(\otimes_{d=1}^D \phi_d\right)\otimes\mathcal{I}}\left[\rho_{A_{\left(1:D\right)}R}\right]\ge \sum_{d=1}^D E_{\phi_d \otimes\mathcal{I}} \left[\rho_{A_{d}R}\right],
\label{superadditivty2}
\end{equation}
with equality achieved when $\rho_{A_d}$'s are conditionally independent on $R$.
\label{super_additivity_theorem2}
\end{theorem}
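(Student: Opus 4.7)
\textbf{Proof proposal for Theorem~\ref{super_additivity_theorem2}.}

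The plan is to rewrite the inequality as the non-negativity of a difference of two conditional mutual informations and then invoke the conditional quantum data processing inequality (Eqn.~\ref{data_process2}), finally extending from $D=2$ to general $D$ by induction.

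First I would treat the base case $D=2$. Write $\sigma_{A_1' A_2' R} = (\phi_1\otimes\phi_2\otimes\mathcal{I})[\rho_{A_1 A_2 R}]$, so that $\sigma_{A_d' R}=(\phi_d\otimes\mathcal{I})[\rho_{A_d R}]$ and $\sigma_R=\rho_R$. The claimed inequality $E_{(\phi_1\otimes\phi_2)\otimes\mathcal{I}}[\rho_{A_1 A_2 R}]\ge E_{\phi_1\otimes\mathcal{I}}[\rho_{A_1 R}]+E_{\phi_2\otimes\mathcal{I}}[\rho_{A_2 R}]$ is, after expanding the entropy gains and using $S(\sigma_R)=S(\rho_R)$, algebraically equivalent to
\begin{equation}
I(A_1{:}A_2|R)_\rho - I(A_1'{:}A_2'|R)_\sigma \;\ge\; 0.
\end{equation}
This follows immediately from Theorem~\ref{data_process_theorem}: applying the conditional data processing inequality (Eqn.~\ref{data_process2}) to the local maps $\phi_1$ on $A_1$ and $\phi_2$ on $A_2$ with $R$ untouched gives exactly $I(A_1'{:}A_2'|R)_\sigma\le I(A_1{:}A_2|R)_\rho$.

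Next, for general $D$, I would induct on $D$. Group the first $D-1$ systems into a composite $A_{(1:D-1)}$ and regard $\Phi_{D-1}\equiv\otimes_{d=1}^{D-1}\phi_d$ as a single CPTP map on it. The $D=2$ result, applied to $\Phi_{D-1}$ and $\phi_D$ with reference $R$, yields
\begin{equation}
E_{(\otimes_{d=1}^D\phi_d)\otimes\mathcal{I}}[\rho_{A_{(1:D)}R}]\;\ge\; E_{\Phi_{D-1}\otimes\mathcal{I}}[\rho_{A_{(1:D-1)}R}] + E_{\phi_D\otimes\mathcal{I}}[\rho_{A_D R}],
\end{equation}
and the inductive hypothesis bounds the first term below by $\sum_{d=1}^{D-1} E_{\phi_d\otimes\mathcal{I}}[\rho_{A_d R}]$, closing the induction. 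For the equality clause, I would note that conditional independence of the $A_d$'s given $R$ means $I(A_d{:}A_{\ne d}|R)_\rho=0$ for every $d$; since each $\phi_d$ acts only on its own factor, conditional independence is preserved by $\otimes_d\phi_d$, so the corresponding CMIs for $\sigma$ also vanish, turning every inductive inequality into an equality.

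I do not expect a major technical obstacle here: the only non-trivial input is the conditional data processing inequality, which is already established in Theorem~\ref{data_process_theorem}. The mild subtlety is purely bookkeeping, namely checking that the algebraic rearrangement of entropy gains into a CMI difference uses only $S(\sigma_R)=S(\rho_R)$ (no change on $R$) and that the reduced marginals $\sigma_{A_d' R}$ are obtained by the single local map $\phi_d\otimes\mathcal{I}$ acting on $\rho_{A_d R}$, which is immediate from the product structure of the global map.
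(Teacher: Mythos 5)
Your proposal is correct and mirrors the paper's own proof: the paper likewise peels off one subsystem at a time (equivalent to your induction), rewrites each step's entropy-gain inequality as a comparison of conditional mutual informations using $S(\sigma_R)=S(\rho_R)$, and closes it with the conditional data processing inequality of Eqn.~\ref{data_process2}. Your treatment of the equality clause via vanishing conditional mutual informations is also consistent with the stated condition.
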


\subsection{Proof of theorem~\ref{super_additivity_theorem}}
\label{proof1}
Denote $\phi^{\left(k+1\right)}\equiv\otimes_{d=1}^{k+1} \phi_d$ for convenience, we only need to show the inequality
\begin{eqnarray}
&E_{\phi^{\left(k+1\right)}}\left[\rho_{A_{\left(1:k+1\right)}B_{\left(1:k+1\right)}}\right]\nonumber\\
&\ge E_{\phi^{\left(k\right)}}\left[\rho_{A_{\left(1:k\right)}B_{\left(1:k\right)}}\right]+E_{\phi_{k+1}} \left[\rho_{A_{k+1}B_{k+1}}\right]
\label{inequalityk}.
\end{eqnarray}
Then by starting at $k=D-1$ and apply the inequality repeatedly until $k=1$ we will arrive at Eqn.~\ref{superadditivty}. By expansion from the definition of entropy gain in the main paper and noticing $\phi^{\left(k+1\right)}=\phi^{\left(k\right)} \otimes \phi_{k+1}$, Eqn.~\ref{inequalityk} is equivalent to
\begin{eqnarray}
&S\left(\phi^{\left(k+1\right)}\left[\rho_{A_{\left(1:k+1\right)}B_{\left(1:k+1\right)}}\right]\right)-S\left(\rho_{A_{\left(1:k+1\right)}B_{\left(1:k+1\right)}}\right)\nonumber\\
&\ge S\left(\phi^{\left(k\right)}\left[\rho_{A_{\left(1:k\right)}B_{\left(1:k\right)}}\right]\right)-S\left(\rho_{A_{\left(1:k\right)}B_{\left(1:k\right)}}\right)\nonumber\\
&+S\left(\phi_{k+1}\left[\rho_{A_{k+1}B_{k+1}}\right]\right)-S\left(\rho_{A_{k+1}B_{k+1}}\right).
\label{entropy_full}
\end{eqnarray}
It is easy to see it is equivalent to
\begin{align}
&
I\left(\phi^{\left(k\right)}\left[A_{\left(1:k\right)}B_{\left(1:k\right)}\right]:\phi_{k+1}\left[A_{k+1}B_{k+1}\right]\right)
&
\nonumber\\
&
\le I\left(A_{\left(1:k\right)}B_{\left(1:k\right)}:A_{k+1}B_{k+1}\right).
&
\label{mutualinfo}
\end{align} 
Eqn.~\ref{mutualinfo} is true due to quantum data processing inequality Eqn.~\ref{data_process1} because the CPTP map $\phi^{\left(k+1\right)}$ acts on local parties independently.

\subsection{Proof of theorem~\ref{super_additivity_theorem2}}
We only need to replace three equations in the proof of theorem~\ref{super_additivity_theorem}. First, we replace Eqn.~\ref{inequalityk} with
\begin{eqnarray}
&E_{\phi^{\left(k+1\right)}\otimes\mathcal{I}}\left[\rho_{A_{\left(1:k+1\right)}R}\right]\nonumber\\
&\ge E_{\phi^{\left(k\right)}\otimes\mathcal{I}}\left[\rho_{A_{\left(1:k\right)}R}\right]+E_{\phi_{k+1}\otimes\mathcal{I}} \left[\rho_{A_{k+1}R}\right]
\label{inequalityk2}.
\end{eqnarray}
Then, replace Eqn.~\ref{entropy_full} with
\begin{eqnarray}
&S\left(\phi^{\left(k+1\right)}\otimes\mathcal{I}\left[\rho_{A_{\left(1:k+1\right)}R}\right]\right)-S\left(\rho_{A_{\left(1:k+1\right)}R}\right)\nonumber\\
&\ge S\left(\phi^{\left(k\right)}\otimes\mathcal{I}\left[\rho_{A_{\left(1:k\right)}R}\right]\right)-S\left(\rho_{A_{\left(1:k\right)}R}\right)\nonumber\\
&+S\left(\phi_{k+1}\otimes\mathcal{I}\left[\rho_{A_{k+1}R}\right]\right)-S\left(\rho_{A_{k+1}R}\right).
\label{entropy_full2}
\end{eqnarray}
Finally, replace Eqn.~\ref{mutualinfo} with
\be
I\left(\phi^{\left(k\right)}\left[A_{\left(1:k\right)}\right]:\phi_{k+1}\left[A_{k+1}\right]|R\right)\le I\left(A_{\left(1:k\right)}:A_{k+1}|R\right).
\ee
And then for the same reason, it is true due to quantum data processing inequality Eqn.~\ref{data_process2} because the CPTP map $\phi^{\left(k+1\right)}$ doesn't change $R$ and acts on local parties independently.

\section{More discussions}
\label{discussion_Shor}

Here we point out the connection between our result and the result in Ref.~\onlinecite{Shor04_sup} by showing that a special case of Ref.~\onlinecite{Shor04_sup}'s result and a special case of our result is equivalent. For our case, we can choose the inaccessible $W_m$'s to be pure and the bipartite measure in the constraints to be effectively single-partite $Q(\rho_{SW})=-S({\rm Tr}_{W} \rho_{SW})$. Then $W_m$ is in product state with everything else and the constraints of Eqn.~\ref{singleform} in the main paper or Eqn.~\ref{Mform1_supp} become constraints on the input entropy, \textit{i.e.}, $S(\rho_{S})\le P$ for some $P>0$. We further restrict the encoding to be unitary operations, thus the encoded state is also constrained by $S(\rho_{S'}^{\left(x\right)})\le P$. Also we see that all the purification of $\rho_{S'}^{\left(x\right)}$ has been sent to Bob, just like in Ref.~\onlinecite{Shor04_sup}'s case. For Ref.~\onlinecite{Shor04_sup}'s result, we restrict the input signal entropy $S(\rho_{S}^{\left(x\right)})\le P$ in each channel use instead of the original expectation form in Ref.~\onlinecite{Shor04_sup} and require that different states indexed by message $x$ are unitary equivalent. After the above restrictions, both formula are equal. In other words, the capacity formula in Ref.~\onlinecite{Shor04_sup} with the above restriction is additive. 

We also want to point out that only in the scenario of two step channel use---resource distribution and encoding, the unconditional state $\rho_{\bf E}$ equals the conditional state $\rho_{\bf E}^{\left(x\right)}$. This allows the entropy gain form in Eqn.~\ref{upperbound1} in the main paper, which is crucial for proving additivity.

\end{document}